\documentclass[11pt]{article}
\usepackage[margin=1in]{geometry}
\usepackage{amsthm,amsfonts,amsmath,amssymb,mathtools}
\usepackage{enumerate,graphicx}
\usepackage{xspace}
\usepackage{boxedminipage}
\usepackage{url}
\usepackage{accents}
\usepackage{multirow}
\usepackage{booktabs}
\usepackage{enumitem}
\usepackage{fullpage}
\usepackage{tikz}
\usepackage{natbib}
\usepackage[bookmarks=false,
breaklinks=false,pdfborder={0 0 1},colorlinks=true,allcolors=blue]{hyperref} %
\usepackage{placeins}

\graphicspath{{img/}}

\usepackage{style}

\usepackage{algorithm}
\usepackage{algpseudocode}

\author{
    Chirag Pabbaraju\thanks{Stanford University. Email: \texttt{cpabbara@cs.stanford.edu.}}
    \and
    Ali Vakilian\thanks{Toyota Technological Institute at Chicago. Email: \texttt{vakilian@ttic.edu.}}
}

\title{New and Improved Bounds for Markov Paging}
\date{\today}

\begin{document}

\maketitle

\begin{abstract}
    In the Markov paging model, one assumes that page requests are drawn from a Markov chain over the pages in memory, and the goal is to maintain a fast cache that suffers few page faults in expectation. While computing the optimal online algorithm $(\mathrm{OPT})$ for this problem naively takes time exponential in the size of the cache, the best-known polynomial-time approximation algorithm is the dominating distribution algorithm due to Lund, Phillips and Reingold (FOCS 1994), who showed that the algorithm is $4$-competitive against $\mathrm{OPT}$. We substantially improve their analysis and show that the dominating distribution algorithm is in fact $2$-competitive against $\mathrm{OPT}$. We also show a lower bound of $1.5907$-competitiveness for this algorithm---to the best of our knowledge, no such lower bound was previously known.
\end{abstract}

\newpage
\section{Introduction}
\label{sec:intro}

The online paging problem is a fundamental problem in %
the area of online
algorithms. There are $n$ pages in slow memory, and requests for these pages arrive in a sequential manner. We are allowed to maintain a fast cache of size $k$, which initially comprises of some $k$ pages. At time step $t$, if the page that is requested, say $p_t$, exists in the cache, this corresponds to a cache hit: we suffer a zero cost, the cache stays as is, and we move on to the next request. Otherwise, we incur a cache miss/page fault, suffer a unit cost, and are forced to evict some page from the cache so as to bring $p_t$ from the slow memory into the cache. A paging algorithm/policy is then specified by how it chooses the page to evict whenever it suffers a cache miss.

An optimal offline algorithm for this problem is the algorithm that has entire knowledge of the page request sequence, and thereafter makes its eviction choices in a way that minimizes the total number of cache misses it incurs. The classical Farthest-in-Future algorithm \citep{belady1966study}, which at any cache miss, evicts the page that is next requested latest in the remainder of the sequence, is known to be an optimal offline policy. In the spirit of competitive analysis as introduced by \cite{sleator1985amortized}, one way to benchmark any online algorithm (i.e., one which does not have knowledge of the page request sequence) is to compute the ratio of the number of page faults that the algorithm suffers, to the number of page faults that the optimal offline algorithm suffers, on a \textit{worst-case} sequence of page requests. It is known that any deterministic online algorithm has a worst-case ratio that is at least $k$, while any randomized online algorithm has a worst-case ratio of $\Omega(\log k)$.

Taking a slightly more optimistic view, one can also consider benchmarking an algorithm against the best \textit{online} algorithm, that does not know beforehand the realized sequence of page requests. Additionally, in the context of beyond worst-case analysis, one can impose assumptions on the sequence of page requests to more accurately model the behavioral characteristics of real-world page requests on computers. For example, page requests generally follow the \textit{locality of reference} principle, either in time (the next time a particular page is requested is close to its previous time) or in space (the page that will be requested next is likely to be in a nearby memory location).%

The Markov Paging model, introduced by \cite{karlin1992markov}, is one of the ways to model such locality of reference in the page requests. The model assumes that page requests are drawn from a (known) Markov chain over the entire set of $n$ pages. Once such a distributional assumption over the page requests is imposed, one can ask how an online paging algorithm fares, in comparison to the optimal online algorithm that suffers the smallest number of page misses \textit{in expectation} over the draw of a page sequence from the distribution. As it turns out, one can compute such an optimal online algorithm, but only in time exponential in the cache size $k$. Therefore, one seeks efficient polynomial-time algorithms that are approximately optimal.

In 1994, \cite{lund1994ip} proposed an extremely elegant randomized algorithm for the Markov Paging problem, known as the dominating distribution algorithm. The dominating distribution algorithm, whenever it suffers a cache miss, evicts a random page drawn from a special distribution, which has the property that a randomly drawn page is likely to be next requested latest among all the existing pages in the cache. This algorithm, which has since become a popular textbook algorithm due to its simplicity (e.g., see Chapter 5 in \cite{borodin2005online}, or the lecture notes by \cite{roughgardennotes}), runs in time polynomial in $k$; furthermore, \cite{lund1994ip} show that the algorithm suffers only 4 times more cache misses in expectation than the optimal online algorithm. Since then, this has remained state-of-the-art---we do not know any other polynomial-time algorithms that achieve a better performance guarantee. It has nevertheless been conjectured that the above guarantee for the dominating distribution algorithm is suboptimal (e.g., see Section 6.2 in \cite{roughgardennotes}).

Our main result in this work establishes an improved upper bound for the dominating distribution algorithm.

\begin{theorem}
    \label{thm:dominating-distribution-improved-bound}
    The dominating distribution algorithm suffers at most 2 times more cache misses in expectation compared to the optimal online algorithm in the Markov Paging model.%
\end{theorem}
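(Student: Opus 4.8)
The plan is to run an amortized (potential-function) analysis comparing the dominating distribution algorithm $\mathrm{ALG}$ against a fixed optimal online algorithm $\mathrm{OPT}$. Since the Markov chain is known, I would first record that there is an optimal online algorithm that is \emph{memoryless}: on a fault it evicts the page minimizing its expected future number of faults, a choice that depends only on the current chain state and its current cache. Fixing such an $\mathrm{OPT}$, its cache $O_t$ and $\mathrm{ALG}$'s random cache $A_t$ become coupled processes driven by the same request sequence, and I would aim for a potential $\Phi_t=\Phi(s_t,A_t,O_t)$, a function of the chain state $s_t$ and the two caches, that is bounded in absolute value by $O(k)$ (and equals $0$ when $A_t=O_t$) and satisfies the one-step drift bound
\[
\mathbb{E}\big[c^{\mathrm{ALG}}_t+\Phi_{t+1}-\Phi_t \mid s_t,A_t,O_t\big]\le 2\,\mathbb{E}\big[c^{\mathrm{OPT}}_t\mid s_t,A_t,O_t\big],
\]
where $c^{\mathrm{ALG}}_t,c^{\mathrm{OPT}}_t\in\{0,1\}$ are the per-step fault indicators and the expectation is over the next request $r_t\sim P(s_t,\cdot)$ together with $\mathrm{ALG}$'s internal coin. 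Summing over $t$, telescoping, and using boundedness of $\Phi$ yields $\mathbb{E}[\#\mathrm{ALG\ faults}]\le 2\,\mathbb{E}[\#\mathrm{OPT\ faults}]+O(k)$, which is exactly $2$-competitiveness.

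The design of $\Phi$ is the heart of the matter, and it must look \emph{forward}: a page $q\in O_t\setminus A_t$ is harmless now but costs $\mathrm{ALG}$ a fault (and $\mathrm{OPT}$ nothing) the moment it is requested, while symmetrically a page $p\in A_t\setminus O_t$ will \emph{force} an $\mathrm{OPT}$ fault when next requested. I would look for $\Phi$ of roughly the form
\[
\Phi(s,A,O)=\sum_{q\in O\setminus A}\big(1+\beta_s(q)\big)-\sum_{p\in A\setminus O}\gamma_s(p),
\]
where $\beta_s(q)\in[0,1]$ records how imminent $q$'s next request is relative to the chain's next departure from $O$ (e.g.\ the probability that $q$ is requested before $\mathrm{OPT}$ next faults, or a capped count of intervening $\mathrm{OPT}$ faults), and $\gamma_s(p)\ge 0$ is a matching credit that is \emph{small} exactly when $p$ is far from being requested. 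The $+1$ terms prepay the fault incurred when a page of $O\setminus A$ is hit; the $\beta$ terms keep $\Phi$ bounded while still tracking future $\mathrm{OPT}$ cost "owed"; the $\gamma$ credits absorb the case where $\mathrm{ALG}$'s eviction lands outside $O$. I expect the constants here to be forced so that the dominating distribution's factor $\tfrac12$ turns into the competitive ratio $2$.

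With $\Phi$ in hand, the proof is a case analysis on whether $r_t$ is a hit or miss for each algorithm. A double hit is trivial; a hit for $\mathrm{ALG}$ with a miss for $\mathrm{OPT}$ is easy, since the right-hand side gains a full $2$ while the caches, hence $\Phi$, move by only $O(1)$; the double-miss case is routine but fiddly because two pages leave and one enters the caches simultaneously. The decisive case is a miss for $\mathrm{ALG}$ and a hit for $\mathrm{OPT}$: then $r_t\in O_t\setminus A_t$, the right-hand side is $0$, and the fault must be paid entirely by the drop in $\Phi$. This is precisely where the dominating distribution enters: writing $N_p$ for the time until page $p$ is next requested from state $r_t$, the evicted page $P$ satisfies $\Pr[N_P\ge N_q]\ge\tfrac12$ for every $q$ currently cached (this is the precise sense in which $P$ is "likely to be requested latest"), so $P$ is, stochastically, requested no earlier than the page $\mathrm{OPT}$ itself would evict. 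Hence when $P$ re-enters $O\setminus A$ it carries a large $\beta$-value, and when it leaves $A\setminus O$ it carried only a small $\gamma$-value; after subtracting the $1+\beta_{s_t}(r_t)$ that leaves the potential as $r_t$ is served and accounting for the state transition $s_t\to r_t$, the net expected change of $\Phi$ should be at most $-1$. Making this quantitative would also need a \emph{monotonicity lemma}: for two caches differing in a single page, the difference of $\mathrm{OPT}$'s expected costs is at most the probability (from the current state) that the page missing from one is requested before the page missing from the other — this is what lets me compare $\beta_s(P)$ with $\beta_s$ of $\mathrm{OPT}$'s evicted page and with $\beta_s(r_t)$.

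The hard part, and where I expect to spend most of the effort, is two-fold. First, calibrating $\beta$ and $\gamma$ and their coefficients so that all the bookkeeping terms — especially the transition $s_t\to r_t$, which perturbs $\beta_s(\cdot)$ and $\gamma_s(\cdot)$ for \emph{every} page at once, and the two-page move in the double-miss case — telescope exactly rather than leaking a spurious constant; this is what decides whether the final ratio is $2$ or something larger. Second, upgrading the dominating distribution's guarantee, which is a statement about one pairwise comparison $N_P$ versus $N_q$, into the aggregate statement about the forward-looking potential that the drift bound actually requires. From this vantage point, Lund--Phillips--Reingold's factor $4$ reads as one factor of $2$ lost in this pairwise-to-aggregate step and a second factor of $2$ lost in separately accounting for $\mathrm{ALG}$'s cache drifting away from $\mathrm{OPT}$'s; the improvement to $2$ should come from charging both at once through a single, appropriately forward-looking potential.
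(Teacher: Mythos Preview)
Your proposal is a research plan, not a proof, and the plan diverges from the paper's actual argument in a way that leaves the crucial step unverified. The paper does \emph{not} use a forward-looking potential; it keeps the Lund--Phillips--Reingold charging scheme essentially intact (with one small addition: when a page $s$ is requested, also clear any charge \emph{borne} by $s$), and the entire improvement from $4$ to $2$ is pure accounting. Concretely, instead of dividing $\sum_t\beta(t)$ by $2$ to handle doubly-charged savior pages, the paper subtracts an explicit counter $\mathcal{D}$ for requests to doubly-charged pages and adds a counter $\mathcal{U}$ for $\mathrm{OPT}$ misses on pages bearing no charge at all, then proves the deterministic invariant $\mathcal{I}-\mathcal{D}-\mathcal{O}+\mathcal{U}\ge 0$ by a short case analysis. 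The pairwise dominating-distribution guarantee is used exactly once, in the same place and the same way as in LPR (comparing the evicted page $p$ to the single fixed savior $c(p)$), so there is no ``pairwise-to-aggregate'' upgrade to perform; your reading of where LPR's two factors of $2$ come from does not match the actual source of looseness.

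The specific gap in your outline is the ``decisive case'' (ALG miss, OPT hit). You need $\E[\Delta\Phi]\le -1$ there, but with $\Phi$ of your form the change when $r_t$ leaves $O\setminus A$ and the evicted $P$ either enters $O\setminus A$ or leaves $A\setminus O$ involves $\beta_{r_t}(P)$ or $\gamma_{r_t}(P)$, aggregated over the random $P$. The dominating distribution only promises $\Pr[N_q\le N_P]\ge\tfrac12$ for each \emph{fixed} $q$; it says nothing about $P$ versus the random time of $\mathrm{OPT}$'s next fault, nor about the specific page $\mathrm{OPT}$ would evict, nor about $\beta$ summed over $O\setminus A$. Your proposed ``monotonicity lemma'' does not bridge this: even granting it, you would still need to identify one fixed comparison page $q$ and route the whole potential drop through $\Pr[N_q\le N_P]$, which is precisely what the charging scheme does by construction (via the savior $c(p)\in A^-\setminus\mathrm{OPT}^+$). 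Without pinning down $\beta_s,\gamma_s$ and carrying out that computation, the drift bound in this case is an assertion, not an argument.
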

In fact, as mentioned in \cite{lund1994ip}, our guarantee for the performance of the dominating distribution algorithm holds more generally for \textit{pairwise-predictive} distributions. These are distributions for which one can compute any time, for any pair of pages $p$ and $q$ in the cache, the probability that $p$ is next requested before $q$ in the future (conditioned on the most recent request).

While the dominating distribution algorithm is a randomized algorithm, \cite{lund1994ip} also propose a simple deterministic algorithm for paging, known as the median algorithm. On any cache miss, the median algorithm evicts the page in cache that has the largest median time at which it is requested next. \cite{lund1994ip} derive a guarantee for this algorithm in a slightly more restrictive setting than that of Markov paging; namely, the distribution of page requests is such that the inter-arrival times of different pages are independent. Under this assumption, \cite{lund1994ip} show that the median algorithm suffers at most 5 times more cache misses in expectation than the optimal online algorithm---to the best of our knowledge, this is the state of the art for deterministic algorithms in this setting. As a convenient consequence of our improved analysis, we can also improve this guarantee for the median algorithm.

\begin{theorem}
    \label{thm:median-improved-bound}
    The median algorithm suffers at most 4 times more cache misses in expectation compared to the optimal online algorithm in the distributional paging model with independent inter-arrival page request times.\footnote{The guarantee for the median algorithm in \cite{lund1994ip} only holds under the additional assumption of independent, inter-arrival page times. In a previous version of this manuscript, we mistakenly reported that the guarantee holds in the more general Markov paging model.}
\end{theorem}

Given the improved performance guarantee in \Cref{thm:dominating-distribution-improved-bound}, one can ask: is this the best bound that we can show for the dominating distribution algorithm? Or is there hope to show that it is \textit{the} optimal online algorithm? A detail here is that the dominating distribution algorithm is defined as any algorithm satisfying a specific property (see \eqref{eqn:dominating-distribution-property}); in particular, there can exist different dominating distribution algorithms. Our upper bound (\Cref{thm:dominating-distribution-improved-bound}) holds uniformly for \textit{all} dominating distribution algorithms. However, our next theorem shows a lower bound that at least one of these is significantly suboptimal. 

\begin{theorem}
    \label{thm:dominating-distribution-lower-bound}
    There exists a dominating distribution algorithm that suffers at least $1.5907$ times more cache misses in expectation compared to the optimal online algorithm in the Markov Paging model.
\end{theorem}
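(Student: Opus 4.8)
The plan is to exhibit an explicit instance --- a finite Markov chain together with a cache size $k$ --- and an explicit choice of dominating distribution at \emph{every} reachable cache configuration, and then to separately lower bound the fault rate of the resulting algorithm and upper bound that of the optimal online algorithm, so that their ratio is at least $1.5907$. Since a ``dominating distribution algorithm'' is only required to satisfy property \eqref{eqn:dominating-distribution-property}, we are free to pick, among all valid dominating distributions at a given state, the one that is worst for us, and the whole construction is engineered so that this worst valid choice is genuinely far from the cost-to-go-optimal eviction. Concretely, I would look for a small gadget in which, at a key cache state $\{a,b\}$ reached after a miss, the next-request times $\tau(a)$ and $\tau(b)$ are tied in distribution (so that $\Pr[\tau(a)<\tau(b)]=\Pr[\tau(b)<\tau(a)]$), in which case \emph{every} distribution over $\{a,b\}$ --- in particular the one supported entirely on the page that is far cheaper to have evicted --- satisfies \eqref{eqn:dominating-distribution-property}, even though the optimal online algorithm strictly prefers to evict that cheap page. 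The gadget must then be wired into a cycle so that this ``permitted mistake'' recurs and the errors accumulate linearly in the length of the request sequence.

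The key steps, in order, would be: (i) write down the transition matrix of the instance --- likely a parametrized family, with $k$ and/or a transition probability $p$ as parameters --- chosen so that the relevant pairwise next-request comparisons come out exactly $1/2$ and the induced chain on cache configurations stays analyzable; (ii) verify \eqref{eqn:dominating-distribution-property} at every reachable cache state for the specific distribution we select, which is a finite collection of linear / minimax feasibility checks, one per state; (iii) set up the Markov chain on pairs (current page, cache contents) induced by our chosen algorithm, solve for its stationary distribution, and read off its long-run fault rate as a function of the parameters; (iv) exhibit a concrete online policy $\mathcal{A}^\star$ on the same instance, so that $\mathrm{OPT}$'s fault rate is at most that of $\mathcal{A}^\star$, analyze its fault rate in the same way, and --- if one wants the bound to be tight --- confirm via the Bellman optimality equations of the induced finite MDP that $\mathcal{A}^\star$ is actually optimal; (v) form the ratio of the two fault rates and optimize over the parameters, obtaining $\ge 1.5907$ (possibly only in a limit such as $k\to\infty$).

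The main obstacle I anticipate is reconciling two requirements that pull in opposite directions. The dominating-distribution property is restrictive: it forces the algorithm to evict ``late-requested'' pages with probability at least $1/2$, so a valid dominating distribution can never be arbitrarily bad --- misbehavior is only allowed at configurations where the next-request orders are (near-)ties, and engineering a \emph{recurring} such configuration while keeping the chain's stationary analysis (and especially the analysis, or the optimality certificate, for $\mathrm{OPT}$) tractable is the delicate part. A secondary difficulty is the bookkeeping needed to show that the permitted mistakes genuinely \emph{compound} --- that after a bad eviction the algorithm is driven back into an equally bad configuration rather than a self-correcting one --- and to quantify the resulting blow-up sharply enough to reach the constant $1.5907$ rather than something weaker such as $1.5$.
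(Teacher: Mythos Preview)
Your high-level plan --- exhibit an instance, pick an adversarial-but-valid dominating distribution at each reachable state, analyze the induced chain on cache configurations, and compare against a concrete reference policy that upper-bounds $\opt$ --- is exactly the paper's approach. The difference lies in the \emph{mechanism} you propose for making the dominating distribution bad, and in the scale of instance you anticipate needing.

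You aim to engineer configurations where $\alpha(a<b)=\alpha(b<a)=1/2$ exactly, so that \eqref{eqn:dominating-distribution-property} imposes no constraint and any eviction distribution is permitted. The paper does essentially the opposite: it takes $n=3$, $k=2$, with \emph{i.i.d.}\ requests where page~$1$ has mass $1-\eps$ and pages~$2,3$ share mass~$\eps$. Here $\alpha(1<2)\approx 1$, not $1/2$ --- page~$1$ is overwhelmingly the first to recur --- yet the constraint $\mu(1)\cdot\alpha(1<2)\le 1/2$ still permits $\mu(1)\approx 1/2$. The adversarial dominating distribution therefore evicts the obviously-correct page~$1$ about half the time, sending the cache to $\{2,3\}$ from which the next request (to page~$1$, with probability $1-\eps$) is a miss. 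The reference policy simply never evicts page~$1$. In other words, the paper exploits the built-in factor-of-two slack in \eqref{eqn:dominating-distribution-property} directly, rather than manufacturing a tie; this is what lets such a tiny instance work.

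Several elements of your plan turn out to be unnecessary. No limit $k\to\infty$ is needed ($k=2$ suffices), and no Bellman-equation certificate that the reference policy is optimal is needed (any online policy upper-bounds $\opt$'s cost, which is the direction required for a lower bound on the ratio). The constant $1.5907$ is not obtained in closed form: the paper writes the $3\times 3$ transition matrix for $\Pr[\text{cache}=\{1,2\}], \Pr[\text{cache}=\{1,3\}], \Pr[\text{cache}=\{2,3\}]$, sums the miss probabilities over $T$ steps, and does a numerical search over $\eps$, the split $\eps_1$ of mass between pages~$2$ and~$3$, and $T$. The symmetric warm-up $\eps_1=\eps/2$ already yields $1.5$ analytically as $\eps\to 0$; the asymmetric split is what pushes the bound to $1.5907$.
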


\Cref{thm:dominating-distribution-lower-bound} shows that one cannot hope to show optimality uniformly for all dominating distribution algorithms. To the best of our knowledge, no such lower bounds for the dominating distribution algorithm have been previously shown. While we believe that the bound in \Cref{thm:dominating-distribution-improved-bound} is the correct bound, it is an interesting open direction to close the gap between the upper and lower bound. %

Finally, we also consider a setting where the algorithm does not exactly know the Markov chain generating the page requests, but only has access to a dataset of past page requests. In this case, one can use standard estimators from the literature to approximate the transition matrix of the Markov chain from this dataset. But can an approximate estimate of the transition matrix translate to a tight competitive ratio for a paging algorithm, and if so, how large of a dataset does it require to achieve it? The robustness of the analysis\footnote{In fact, we borrow this robustness from the original analysis of \cite{lund1994ip}.} of \Cref{thm:dominating-distribution-improved-bound} lets us derive a precise learning-theoretic result (\Cref{thm:learning-markov-chains} in \Cref{sec:robustness}) showing that the dominating distribution algorithm, when trained on a dataset of size $O(n^2/\eps^2)$, is no more than $\frac{2}{1-2\eps}$ times worse than the optimal online algorithm on a fresh sequence of page requests. 

This result is especially relevant in the context of data-driven and learning-augmented algorithms. For instance, by treating the learned Markov chain as a learned oracle and combining it with a worst-case approach, such as the Marker algorithm, similarly to the approach proposed in~\citep{mahdian2012online,lykouris2021competitive}, the resulting algorithm achieves $O(1)$-consistency and $O(\log k)$-robustness guarantees for the online paging problem.

\begin{table}[H]
    \centering
    \setlength{\tabcolsep}{3pt}
    \renewcommand{\arraystretch}{0.95}
    \begin{tabular}{|c|c|c|c|c|}\hline
    Setting & Algorithm & Reference & Upper Bound & Lower Bound \\\hline
    \multirow{4}{*}{Markov Paging}
      & \multirow{4}{*}{\begin{tabular}{c}Dominating Distribution\\(Randomized)\end{tabular}}
      & \multirow{2}{*}{\cite{lund1994ip}} & 4 & \multirow{2}{*}{-} \\
      & &  & (Theorem 3.5) & \\ \cline{3-5}
      & & \multirow{2}{*}{This Work} & \textbf{2} & \textbf{1.5907} \\
      & & & (\Cref{thm:dominating-distribution-improved-bound}) & (\Cref{thm:dominating-distribution-lower-bound}) \\ \hline 
    \multirow{4}{*}{\begin{tabular}{c}Independent\\inter-arrival\\times\end{tabular}}
      & \multirow{4}{*}{\begin{tabular}{c}Median\\(Deterministic)\end{tabular}}
      & \multirow{2}{*}{\cite{lund1994ip}} & 5 & \textbf{1.511}\footnotemark  \\
      & &  & (Theorem 2.4) & (Theorem 5.4) \\ \cline{3-5}
      & & \multirow{2}{*}{This Work} & \textbf{4} & \multirow{2}{*}{-} \\
      & & & (\Cref{thm:median-improved-bound}) & \\ \hline
    \end{tabular}
    \caption{State of the art for Distributional Paging (best-known bounds are in bold). Both the median and the dominating distribution algorithm are due to \cite{lund1994ip}.}
    \label{table:summary}
\end{table}
\footnotetext{\cite{lund1994ip} only provide a proof for a lower bound of 1.4, but mention that they also obtained 1.511.}

\subsection{Other Related Work}
\label{sec:related-work}

There is a rich literature by this point on studying the online paging problem under assumptions on the permissible page requests that are supposed to model real-world scenarios (see e.g., the excellent surveys by \cite{irani2005competitive} and \cite{dorrigiv2005survey}). In particular, several models are motivated by the aforementioned locality of reference principle. The access graph model, proposed originally by \cite{borodin1991competitive}, and developed further in the works of \cite{irani1992strongly}, \cite{fiat1995randomized} and \cite{fiat1997truly}, assumes an underlying graph on the pages, which a request sequence is supposed to abide by. Namely, if a page $p_t$ has been requested at time $t$, then only the neighbors of $p_t$ in the graph may be requested at time $t+1$. In this sense, the Markov Paging model can be thought of as a probabilistic variant of the access graph model.
\cite{torng1998unified} models locality of reference based on the ``working sets'' concept, introduced by~\cite{denning1968working}, and also shows how a finite lookahead of the page sequence can be used to obtain improved guarantees. A related form of competitive analysis, proposed by \cite{koutsoupias2000beyond}, assumes that there is a family of valid distributions, and page sequences are drawn from some distribution belonging to this family. The goal of an algorithm is then to be simultaneously competitive with the optimal algorithm for every distribution in the family. A particular family of interest suggested by \cite{koutsoupias2000beyond} (which is incomparable to the Markov Paging model) is that of \textit{diffuse adversaries}, which has since been developed further in follow-up works by \cite{young1998bounding, young2000line} and \cite{becchetti2004modeling}. Finally, \cite{angelopoulos2009paging} study paging under the \textit{bijective analysis} framework, which compares the performance of two algorithms under bijections on page request sequences.

\section{Background and Setup}
\label{sec:background}

\subsection{Markov Paging}
\label{sec:markov-paging}

The Markov Paging model assumes that the page requests are drawn from a time-homogeneous Markov chain, whose state space is the set of all pages. More precisely, the Markov Paging model is specified by an initial distribution on the $n$ pages, and a transition matrix $M \in [0,1]^{n \times n}$, where $M_{i,j}$ specifies the probability that page $j$ is requested next, conditioned on the most recently requested page being $i$. The form of the initial distribution will not be too important for our purposes, but for concreteness, we can assume it to be the uniform distribution on the $n$ pages. It is typically assumed that the transition matrix $M$ is completely known to the paging algorithm/page eviction policy---however, the realized page sequence is not. Upon drawing $T$ page requests from the Markov chain, the objective function that a page eviction policy $\mcA$ seeks to minimize is:
\begin{align}
    \label{eqn:objective}
    \E[\cost(\mcA, T)] &= \E_{M, \mcA}\left[\sum_{t=1}^T\Ind[\text{$\mcA$ suffers a cache miss at time $t$}]\right],
\end{align}
where the expectation is with respect to both the random page request sequence drawn from $M$, and the randomness in the policy $\mcA$. We denote by $\opt$ the policy that minimizes the objective in \eqref{eqn:objective}. For $c \ge 1$, an eviction policy $\mcA$ is said to $c$-competitive against $\opt$ if for every $T$, $ \E[\cost(\mcA, T)] \le c \cdot \E[\cost(\opt, T)]$.

It is worth emphasizing again that the competitive analysis above is with respect to the optimal \textit{online} strategy, in that $\opt$ does not know beforehand the realized sequence of page requests. Contrast this to the best \textit{offline} strategy, which also additionally knows the realized sequence of page requests. As mentioned previously, the Farthest-in-Future policy \citep{belady1966study}, which at any cache miss, evicts the page that is next requested latest in the remainder of the sequence, is known to be an optimal offline policy. %

It is indeed possible to compute $\opt$ exactly---however, the best known ways do this are computationally very expensive, requiring an exponential amount of computation in $k$.

\begin{theorem}[Theorems 1, 2 in \cite{karlin1992markov}]
    \label{thm:opt-computation}
    The optimal online policy that minimizes $\lim_{T \to \infty}\frac{1}{T}\cdot\E[\cost(\mcA, T)]$ over all policies $\mcA$ can be computed exactly by solving a linear program in $n\binom{n}{k}$ variables. Furthermore, for any finite $T$, an optimal online policy that minimizes $\E[\cost(\mcA, T)]$ can be computed in time $T \cdot n^{\Omega(k)}$. %
\end{theorem}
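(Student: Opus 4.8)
The plan is to recast Markov paging as a finite-state Markov decision process (MDP) and then apply the standard machinery for solving MDPs: linear programming for the long-run average cost, and backward induction for a fixed finite horizon. The first thing to establish is what the relevant \emph{state} of the system is. I would take the state, at the instant request $p_t$ arrives but before it is served, to be the pair $(S,r)$, where $S$ is the current set of $k$ cached pages and $r = p_t$ is the requested page ($r$ may or may not lie in $S$); there are exactly $n\binom{n}{k}$ such pairs. By the Markov property of $M$, the future is a function of $(S,r)$ alone: if $r\in S$ this is a cache hit (cost $0$, cache unchanged), while if $r\notin S$ this is a miss (cost $1$) after which the policy must evict some $p\in S$; in either case the next request $r'$ is then drawn from $M_{r,\cdot}$ and the state becomes $(S',r')$ with $S'$ the updated cache. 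Since, conditioned on $(S,r)$, nothing about the pre-time-$t$ history affects either the reachable future cache configurations or the distribution of future requests, it is without loss of generality to restrict to policies that are functions of $(S,r)$; this is the familiar fact that optimal policies for MDPs may be taken history-independent (and, for the average-cost criterion, stationary).

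For the objective $\lim_{T\to\infty}\frac{1}{T}\E[\cost(\mcA,T)]$ I would invoke the classical linear-programming characterization of optimal average-cost policies for finite MDPs. Take as variables one scalar $h(S,r)$ per state, plus the gain $g$ --- a total of $n\binom{n}{k}+1$ variables --- and maximize $g$ subject to $g + h(S,r) \le \sum_{r'} M_{r,r'}\,h(S,r')$ for every state with $r\in S$, and $g + h(S,r) \le 1 + \sum_{r'} M_{r,r'}\,h(S\setminus\{p\}\cup\{r\},\,r')$ for every state with $r\notin S$ and every $p\in S$. These are exactly the Bellman optimality inequalities, so an optimal value of $g$ is the optimal gain and an optimal deterministic stationary eviction rule is recovered by complementary slackness (in a missed state, evict the $p$ whose constraint is tight). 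A mild subtlety is that if $M$ is not irreducible, the induced MDP may be multichain and the single scalar $g$ must be replaced by a per-state gain coupled to the bias in the usual way; this still leaves $O(n\binom{n}{k})$ variables.

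For a fixed finite horizon $T$ the optimal policy is in general non-stationary, so I would compute it directly by backward induction. Let $V_t(S,r)$ denote the minimum expected number of cache misses over requests $t,t+1,\dots,T$, given that request $t$ arrives in state $(S,r)$. Then $V_{T+1}\equiv 0$ and, for $t\le T$,
$$V_t(S,r) = \Ind[r\notin S] + \min_{S'}\ \sum_{r'} M_{r,r'}\, V_{t+1}(S',r'),$$
where $S'=S$ if $r\in S$, and otherwise $S'$ ranges over the $k$ caches obtained from $S$ by evicting one page and inserting $r$. Filling the tables $V_{T+1},V_T,\dots,V_1$ takes $T$ sweeps, each over the $n\binom{n}{k}$ states with an $O(k)$-way minimization and an $O(n)$-term expectation per state; the optimal time-$t$ eviction rule is the attaining $S'$, and the expected cost of $\opt$ is $\E_{p_1}[V_1(S_0,p_1)]$ for the initial cache $S_0$. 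The running time is $T\cdot\binom{n}{k}\cdot\mathrm{poly}(n)$, which matches the stated $T\cdot n^{\Omega(k)}$.

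The only step requiring genuine care is the average-cost part: pinning the variable count to $n\binom{n}{k}$ relies on using the value-function formulation of the MDP LP (the more commonly presented state--action frequency LP would carry an extra factor of $k$), and on a clean description of the chain structure of the induced MDP so one knows precisely when a single scalar gain is valid. The MDP reduction and the finite-horizon recursion are otherwise routine, modulo bookkeeping about the initial distribution.
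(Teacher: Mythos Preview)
The paper does not prove this theorem: it is stated as a citation to \cite{karlin1992markov}, and the surrounding text only gives a one-paragraph informal sketch. That sketch is exactly the approach you take---model the problem as an MDP whose states are $\{$cache configuration, requested page$\}$ pairs (of which there are $n\binom{n}{k}$), solve the average-cost case by the standard LP with one variable per state, and solve the finite-horizon case by backward induction on the Bellman recurrence. Your write-up is considerably more detailed than the paper's, but the underlying idea is identical.
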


In the latter case above when $T$ is finite, the optimal online policy can be computed, for example, via dynamic programming (using the ``Bellman operator''), where one writes the recurrence for taking a single step of the optimal policy, and unrolls time backwards from $T$. However, the state space in the dynamic program comprises of all $\{$cache state, requested page$\}$ pairs---there are $n^{\Omega(k)}$ many of these. Similarly, the linear program in the case where $T \to \infty$ also comprises of a single variable for each such state. Unfortunately, we do not know of any better algorithms that are both optimal and run in polynomial time in $k$. Therefore, one seeks efficient polynomial time (in $k$) algorithms that are approximately optimal, i.e., achieve $c$-competitiveness with $\opt$ for $c$ as close to $1$ as possible.

Towards this, \cite{karlin1992markov} showed that several intuitive eviction policies (e.g., on a cache miss, evict the page in cache that has the largest \textit{expected} time to be requested next) are necessarily at least an $\Omega(k)$ factor suboptimal compared to $\opt$. They then showed a policy $\mcA$ that is provably $O(1)$-competitive against $\opt$; however, the constant in the $O(1)$ is somewhat large. Thereafter, \cite{lund1999paging}\footnote{We reference the journal version \citep{lund1999paging} in lieu of the conference version \citep{lund1994ip} hereafter.} proposed a simple and elegant polynomial-time policy, referred to as the \textit{dominating distribution algorithm}, and showed that it achieves $4$-competitiveness against $\opt$. We describe this policy ahead.

\subsection{Dominating Distribution Algorithm}
\label{sec:dominating-distribution}

The dominating distribution algorithm, which was proposed by \cite{lund1999paging} and is denoted $\mcA_\dom$ hereon, operates as follows. At any page request, if $\mcA_\dom$ suffers a cache miss, it computes a so-called \textit{dominating} distribution $\mu$ over the currently-existing pages in the cache. Thereafter, it draws a page $p \sim \mu$, and evicts $p$. First, we will specify the properties of this dominating distribution. 

Intuitively, we want $\mu$ to be a distribution such that a \textit{typical} page drawn from $\mu$ will be highly likely to be next requested later than \textit{every} other page in the cache. Formally, suppose that the most recently requested page (that caused a cache miss) is $s$. Fix any two pages $p$ and $q$ (not necessarily in the cache), and let
\begin{align}
    \label{eqn:alpha-q-p-def}
    \alpha(p<q|s) := \Pr_M[\text{$p$ is next requested before $q$ }|\text{ $s$ is the most recently requested page}],
\end{align}
and let $\alpha(p<p|s)=0$ for all pages $p$. Here, the probability is only over the draw of page requests from the Markov chain.
The claim is that these $\alpha(p<q|s)$ values can be computed efficiently, in total time $O(n^4)$ for all pairs $p,q$ in cache, provided that we know the transition matrix $M$. To see this, let $x_s=\alpha(p<q|s)$. Note that $x_p=1$, $x_q=0$. For any other page $r \neq p,q$, note that
\begin{align*}
    x_r = M_{r,1}x_1 + \dots + M_{r,n}x_n.
\end{align*}
Thus, upon solving the linear system given by
\begin{align}
    \label{eqn:linear-system-for-alpha}
    \begin{bmatrix}
        M_{1,1} - 1 & M_{1,2} & \dots & M_{1,n} \\
        M_{2,1} & M_{2,2} -1 & \dots & M_{1,n} \\
        \vdots & \vdots & \cdots & \vdots\\
        \boldsymbol{e_p} \\
        \vdots & \vdots & \cdots &\vdots \\
        \boldsymbol{e_q} \\
        \vdots & \vdots & \cdots & \vdots \\
        M_{n,1} & M_{n,2} & \dots & M_{n,n}-1
    \end{bmatrix}
    \begin{bmatrix}
        x_1 \\ x_2 \\ \vdots \\ x_p \\ \vdots \\ x_q \\ \vdots \\ x_n
    \end{bmatrix} = 
    \begin{bmatrix}
        0 \\ 0 \\ \vdots \\ 1 \\ \vdots \\ 0 \\ \vdots \\ 0
    \end{bmatrix},
\end{align}
where $\boldsymbol{e_p}$ and $\boldsymbol{e_q}$ are row vectors with a $1$ in the $p^{\text{th}}$ and $q^{\text{th}}$ coordinates, respectively, and $0$ elsewhere, $x_s$ is the desired value $\alpha(p<q|s)$. %
In this way, we can solve a linear system for each of the $\le n^2$ possible pairs for $q,p$, and compute all $\alpha(p<q|s)$ values in time $O(n^4)$. For notational convenience, we will hereafter frequently denote $\alpha(p<q|s)$ simply by $\alpha(p<q)$, implicitly assuming conditioning on the most frequently requested page.

Equipped with all these $\alpha(p<q)$ values, let us define $\mu$ to be a distribution over the pages in the cache satisfying the following property: for every fixed page $q$ in the cache,
\begin{align}
    \label{eqn:dominating-distribution-property}
    \E_{p \sim \mu}[\alpha(p<q)] \le \frac{1}{2}.
\end{align}
This is the required dominating distribution. What may be somewhat surprising here is that such a dominating distribution, which satisfies the condition in \eqref{eqn:dominating-distribution-property} provably \textit{always} exists, and furthermore can be constructed efficiently by solving a linear program with just $O(k)$ variables (i.e., in $\poly(k)$ time).
\begin{theorem}[Theorem 3.4 in \cite{lund1999paging}]
    \label{thm:dominating-distribution-exists}
    A distribution $\mu$ satisfying the condition in \eqref{eqn:dominating-distribution-property} necessarily exists and can be computed by solving a linear program in $O(k)$ variables.
\end{theorem}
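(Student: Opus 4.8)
The plan is to recognize the existence of a dominating distribution as the statement that a suitable two-player zero-sum game has value at most $\tfrac12$, and then to invoke the minimax theorem. Write $C$ for the set of $k$ pages currently in the cache, and recall $\alpha(p<q)$ from \eqref{eqn:alpha-q-p-def} with the convention $\alpha(p<p)=0$. Consider the game in which the eviction (minimizing) player chooses a page $p \in C$, the adversary (maximizing) player chooses a page $q \in C$, and the payoff to the adversary is $\alpha(p<q)$. A mixed strategy of the minimizer is exactly a distribution $\mu$ over $C$, and since the payoff is linear in the adversary's mixed strategy, $\mu$ guarantees payoff at most $\tfrac12$ against all adversary strategies if and only if $\E_{p\sim\mu}[\alpha(p<q)] \le \tfrac12$ for every pure $q \in C$, which is exactly property \eqref{eqn:dominating-distribution-property}. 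So it suffices to show the value of this game is at most $\tfrac12$.

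First I would apply the von Neumann minimax theorem (equivalently, LP strong duality --- both strategy sets are simplices, hence compact and convex, and the payoff is bilinear), which lets us write the value as $\max_{\nu}\min_{p \in C}\sum_{q\in C}\nu(q)\,\alpha(p<q)$, the outer maximum over distributions $\nu$ on $C$. Fix any such $\nu$. Bounding the minimum over $p$ by the $\nu$-average over $p$ gives
\begin{align*}
    \min_{p \in C}\ \sum_{q\in C}\nu(q)\,\alpha(p<q)\ \le\ \E_{p,q\sim\nu}\big[\alpha(p<q)\big].
\end{align*}
Now the events ``$p$ is requested before $q$'' and ``$q$ is requested before $p$'' are disjoint, so $\alpha(p<q)+\alpha(q<p)\le 1$ for all $p,q$ (equality failing only when there is positive probability that neither is ever requested again). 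Since swapping coordinates preserves the product distribution $\nu\times\nu$,
\begin{align*}
    2\,\E_{p,q\sim\nu}\big[\alpha(p<q)\big] = \E_{p,q\sim\nu}\big[\alpha(p<q)+\alpha(q<p)\big]\ \le\ 1,
\end{align*}
hence $\E_{p,q\sim\nu}[\alpha(p<q)] \le \tfrac12$. As $\nu$ was arbitrary, the value is at most $\tfrac12$, establishing existence.

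For the algorithmic part, the minimizer's optimal strategy is read off from the linear program $\min\{\,t : \sum_{p\in C}\mu(p)\,\alpha(p<q)\le t \ \forall q\in C,\ \sum_{p\in C}\mu(p)=1,\ \mu\ge 0\,\}$, which has $k+1 = O(k)$ variables (and $O(k)$ constraints); the argument above certifies its optimal value is at most $\tfrac12$, so any optimal $\mu$ satisfies \eqref{eqn:dominating-distribution-property}. The coefficients $\alpha(p<q)$ are computed beforehand in $O(n^4)$ time via the linear systems \eqref{eqn:linear-system-for-alpha}. I expect the only genuinely delicate point to be the symmetrization step: one must resist assuming the clean identity $\alpha(p<q)+\alpha(q<p)=1$, which can fail for a Markov chain, and instead use only that disjointness gives the inequality $\le 1$, which is all the averaging argument needs. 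The remaining ingredients --- the game formulation and the minimax invocation --- are routine.
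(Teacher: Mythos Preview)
The paper does not give its own proof of this statement; it simply cites it as Theorem~3.4 in \cite{lund1999paging}. Your argument is essentially the classical one from that reference: cast the problem as a two-player zero-sum game on the $k\times k$ payoff matrix $\big(\alpha(p<q)\big)_{p,q\in C}$, invoke minimax, and bound the adversary's value by symmetrizing with $p,q\sim\nu$ independently and using $\alpha(p<q)+\alpha(q<p)\le 1$. The proof is correct, and your care in using only the inequality $\alpha(p<q)+\alpha(q<p)\le 1$ (rather than equality, which can fail if the chain need not hit $\{p,q\}$ almost surely) is well placed. One cosmetic remark: since the existence argument already shows the value is at most $\tfrac12$, you could equally well solve the \emph{feasibility} LP with $t=\tfrac12$ fixed (exactly $k$ variables) rather than minimizing $t$; either formulation has $O(k)$ variables and matches the statement.
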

Thus, the total computation required by $\mcA_\dom$ at each cache miss is only $\poly(k)$ (to solve the linear program that constructs $\mu$), assuming all $\alpha(p<q)$ values are precomputed initially. In summary, the overall computation cost of $\mcA_\dom$ across $T$ page requests is at most $T \cdot \poly(n, k)$.

It remains to argue that if $\mcA_\dom$ evicts $p \sim \mu$ whenever it suffers a cache miss, then it fares well in comparison to $\opt$. A convenient observation towards showing this is the following claim, whose proof follows from the definition of $\mu$:
\begin{claim}
    \label{claim:dominating-distribution-probability}
    Let $s$ be the most recently requested page. The dominating distribution $\mu$ satisfies the following property: for every fixed page $q$ in the cache, we have that
    \begin{align}
        \Pr_{p \sim \mu, M}[\text{$q$ is next requested no later than $p$ }|\text{ $s$ is the most recently requested page}] \ge \frac{1}{2}.
    \end{align}
\end{claim}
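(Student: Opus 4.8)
The plan is to derive the claim directly from the defining property \eqref{eqn:dominating-distribution-property} of $\mu$ by a complementation argument, followed by linearity of expectation. Fix the most recently requested page $s$ and an arbitrary page $q$ currently in the cache, and condition on $s$ throughout. For a page $p$, let $E_{p,q}$ denote the event ``$q$ is next requested no later than $p$''. The key observation is that, since page requests arrive one at a time, for $p \neq q$ the event $E_{p,q}$ is exactly the complement of the event ``$p$ is next requested before $q$'', whose probability is $\alpha(p<q\mid s)$ by \eqref{eqn:alpha-q-p-def}; here one adopts the natural tie-breaking convention that the boundary case in which neither $p$ nor $q$ is ever requested again is counted inside $E_{p,q}$, so that the two events are genuinely complementary. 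Hence $\Pr_M[E_{p,q}\mid s] = 1 - \alpha(p<q\mid s)$. When $p = q$, the event $E_{p,q}$ holds with probability $1$, which also equals $1 - \alpha(p<q\mid s)$ because $\alpha(q<q\mid s) = 0$ by definition. So the identity $\Pr_M[E_{p,q}\mid s] = 1 - \alpha(p<q\mid s)$ holds for every page $p$.

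Next I would average this identity over the random page $p \sim \mu$. By linearity of expectation,
\[
\Pr_{p\sim\mu,\,M}[E_{p,q}\mid s] \;=\; \E_{p\sim\mu}\bigl[1 - \alpha(p<q\mid s)\bigr] \;=\; 1 - \E_{p\sim\mu}\bigl[\alpha(p<q\mid s)\bigr] \;\ge\; 1 - \tfrac12 \;=\; \tfrac12,
\]
where the last inequality is exactly the dominating distribution property \eqref{eqn:dominating-distribution-property} instantiated at the fixed page $q$. Since $q$ was an arbitrary page in the cache, this is precisely the stated claim.

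This argument has essentially no hard step: it is complementation plus linearity of expectation, exploiting the fact that $\mu$ is constructed so that \eqref{eqn:dominating-distribution-property} holds simultaneously for every $q$ in the cache. The only mild subtlety worth spelling out is the boundary/tie-breaking convention (the cases $p = q$ and ``$p,q$ never requested again''), which must be fixed so that $E_{p,q}$ is genuinely the complement of the event defining $\alpha(p<q\mid s)$; with the natural convention above, the identity $\Pr_M[E_{p,q}\mid s] = 1-\alpha(p<q\mid s)$ is exact and everything else is immediate.
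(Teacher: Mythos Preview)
Your proposal is correct and follows essentially the same approach as the paper: both arguments expand the dominating distribution condition \eqref{eqn:dominating-distribution-property}, interpret the averaged $\alpha(p<q\mid s)$ as the probability that $p$ is next requested before $q$, and then take complements to obtain the desired bound. Your treatment is slightly more explicit about the boundary cases ($p=q$ and the ``never requested again'' convention), but the underlying idea is identical.
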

\begin{proof}
    Writing out the condition \eqref{eqn:dominating-distribution-property} that $\mu$ satisfies more explicitly, we have that for every fixed page $q$ in cache,
    \begin{align*}
        &\sum_{\text{$p$ in cache}}\mu(p)\cdot \alpha(p<q|s) \le \frac{1}{2} \\
        \implies \quad & \sum_{\text{$p$ in cache}}\mu(p)\cdot \Pr_M[\text{$p$ is next requested before $q$ }|\text{ $s$ is the most recently requested page}] \le \frac{1}{2} \\
        \implies \quad &\Pr_{p \sim \mu, M}[\text{$p$ is next requested before $q$ }|\text{ $s$ is the most recently requested page}] \le \frac12 \\
        \implies \quad &\Pr_{p \sim \mu, M}[\text{$q$ is next requested no later than $p$ }|\text{ $s$ is the most recently requested page}] \ge \frac{1}{2}.
    \end{align*}
\end{proof}

Using this property of $\mu$, \cite{lund1999paging} show that $\mcA_\dom$ incurs only $4$ times as many cache misses as $\opt$ in expectation. The following section provides this part of their result.

\section{4-competitive Analysis of \cite{lund1999paging}}
\label{sec:lpr}

We restate and provide a more detailed proof of the crucial technical lemma from \cite{lund1999paging}, which is particularly useful for our improved $2$-competitive analysis in Section~\ref{sec:tight-analysis}.

\begin{lemma}[Lemma 2.5 in \cite{lund1999paging}]
    \label{lem:lpr}
    Let $\mcA$ be a paging algorithm. Suppose that whenever $\mcA$ suffers a cache miss, the page $p$ that it chooses to evict is chosen from a distribution such that the following property is satisfied: for every page $q$ in the cache, the probability (over the choice of $p$ and the random sequence ahead, conditioned on the most recently requested page) that $q$ is next requested no later than the next request for $p$ is at least $1/c$. Then $\mcA$ is $2c$-competitive against $\opt$.
\end{lemma}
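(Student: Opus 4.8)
The plan is to set up a charging argument that compares, phase by phase, the expected faults of $\mcA$ against those of $\opt$. The natural unit of comparison is a \emph{phase}: I would partition time into maximal intervals during which $\opt$ suffers a bounded number of faults (say, each phase ends right after $\opt$'s $m$-th fault for a parameter $m$ to be chosen, or alternatively work with $\opt$-faults individually and amortize). Within a phase, the key quantity to track is the set of pages that $\mcA$ has in its cache but $\opt$ does not, or more precisely, the pages $\mcA$ has evicted during the current phase that $\opt$ still holds. I expect the cleanest route is the potential-function version of this: define a potential $\Phi$ measuring the discrepancy between the two caches (e.g. the number of pages in $\opt$'s cache but not $\mcA$'s, possibly weighted), and show that $\E[\text{$\mcA$'s fault at time $t$}] + \E[\Delta\Phi] \le (2c)\cdot\E[\text{$\opt$'s fault at time $t$}]$ at every step, with $\Phi\ge 0$ and $\Phi$ bounded initially; summing telescopes to the claimed bound.

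The heart of the argument is using the hypothesized property of $\mcA$'s eviction distribution. When $\mcA$ faults on page $p_t$ and evicts a random page $p$, the property says: for the particular page $q$ that $\opt$ would ``blame'' (the page $\opt$ evicted, or a page $\opt$ holds that $\mcA$ does not), with probability at least $1/c$ the page $q$ is requested no later than $p$ is requested again. The point of this is that if $q$ is requested before $p$ comes back, then $\mcA$ does \emph{not} fault on $q$ in the interim merely because it evicted $p$ — intuitively $\mcA$'s eviction of $p$ ``does no harm'' relative to $q$ before $p$ is needed again. I would make this precise by arguing that each fault of $\mcA$ that is not directly matched to a fault of $\opt$ can be charged, with probability $\ge 1/c$, to a ``successful'' earlier eviction, and then count: in a phase where $\opt$ faults $m$ times, $\mcA$ can be forced to fault on the $m$ distinct request pages $\opt$ faulted on, plus at most... — this is exactly where the factor $2c$ (rather than $c$) enters: each $\opt$-fault can be responsible for at most one ``bad'' eviction by $\mcA$ whose bad event (probability $\le 1-1/c$) then cascades, and one carefully bounds the expected number of $\mcA$-faults per $\opt$-fault by $2c$ using linearity of expectation over the $\ge 1/c$ success probabilities. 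Concretely, I anticipate a step of the form: expected number of $\mcA$-faults charged to a single $\opt$-fault $\le \sum_{i\ge 0}(1-1/c)^i \cdot(\text{something}) $, or more likely a direct inductive/amortized bound giving $2c$ after accounting both for the $\opt$-fault itself and for the at-most-one extra fault $\mcA$ incurs when its ``protected'' eviction fails.

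The main obstacle — and the step I would spend the most care on — is correctly defining which page $q$ to invoke the property against at each of $\mcA$'s faults, and showing the bookkeeping is consistent: the property is conditional on the most recently requested page and is a statement in expectation over $\mcA$'s internal randomness \emph{and} the future, so I must be careful that I only ever apply it at the moment of eviction (when the conditioning and the randomness are ``fresh''), and that the events I chain together are genuinely handled by a single application each rather than being silently correlated across steps. The standard device to handle this is to fix a coupling / reveal the randomness lazily and to charge each $\mcA$-fault to the \emph{most recent} eviction (by $\mcA$) of the currently-requested page: that eviction happened at some earlier cache miss, at which point the property applies to the pair (evicted page $=$ current request, $q=$ the page $\opt$ evicted at that same step), giving the $1/c$ handle. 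Making the phase boundaries interact cleanly with these cross-time charges (an eviction in one phase causing a fault in the next) is the fiddly part; I would either choose phases to align with $\opt$'s faults so that at most one such ``straddling'' charge occurs per phase, or absorb straddling charges into the potential. Once the per-$\opt$-fault expected charge is shown to be at most $2c$, summing over all of $\opt$'s faults (plus the $O(k)$ initialization discrepancy, which is a lower-order additive term and vanishes in the competitive ratio as $T\to\infty$, or can be folded into the definition of $c$-competitiveness) yields $\E[\cost(\mcA,T)]\le 2c\cdot\E[\cost(\opt,T)]$.
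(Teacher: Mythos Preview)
Your proposal has a concrete gap in how you invoke the hypothesis. The property in the lemma says: for every page $q$ \emph{in $\mcA$'s cache}, the probability that $q$ is next requested no later than $p$ is at least $1/c$. You propose to take $q$ to be ``the page $\opt$ evicted, or a page $\opt$ holds that $\mcA$ does not.'' The second option is a page \emph{not} in $\mcA$'s cache, so the property does not apply to it at all; and the first option may not be in $\mcA$'s cache either, and in any case $\opt$ need not have evicted anything at that step. More importantly, your intended payoff from the event ``$q$ requested before $p$'' is that $\mcA$ is not hurt---but that is not what you need. What you need is that $\opt$ \emph{is} hurt, i.e.\ that $\opt$ faults. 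For that you want $q$ to be a page that is currently \emph{missing from $\opt$'s cache}, so that a request to $q$ forces an $\opt$-fault. The right choice is therefore $q \in \mcA^- \setminus \opt^+$ (which is in $\mcA$'s cache, so the property applies, and not in $\opt$'s cache, so its request costs $\opt$). The existence of such a $q$ follows from a counting argument on $\mcA^- \setminus \opt^+$ versus $\opt^+ \setminus \mcA^-$.

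With the correct choice of $q$, the argument becomes a \emph{forward} charge: each $\mcA$-eviction of $p$ designates a savior page $c(p)\notin\opt^+$; with probability $\ge 1/c$, $c(p)$ is requested no later than $p$, and this request is an $\opt$-fault. The factor of $2$ then arises not from any geometric cascade or cross-phase straddling, but simply because a single page (hence a single $\opt$-fault) can carry charges from at most two distinct $\mcA$-evictions under this scheme, so $\sum_t \beta(t)$ overcounts $\opt$'s faults by at most a factor of $2$. Your backward-charging picture (trace an $\mcA$-fault back to the eviction that caused it, then to an $\opt$-fault ``at that same step'') and the $\sum_i (1-1/c)^i$ heuristic do not lead anywhere here: the hypothesis gives you no control over $\mcA$'s own future faults, only over the relative order of future requests, which is exactly what lets you pin down $\opt$'s faults.
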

\begin{proof}
    Consider an infinite (random) sequence of page requests, and consider running $\mcA$ and $\opt$ (independently of each other) on this sequence. At any time $t$ that $\mcA$ suffers a cache miss (say on a request to page $s$), suppose that $\mcA$ chooses to evict page $p$. Let $\mcA^-$ denote the contents in the cache of $\mcA$ just \textit{before} the request. Similarly, let $\opt^+$ denote the contents in the cache of $\opt$ just \textit{after} the request.
    
The proof uses a carefully designed charging scheme:\footnote{The charging scheme is purely for analysis purposes.} if $\mcA$ has to evict a page $p$ at time $t$ (due to a request for page $s$), $p$ assigns a charge to a page $c(p)$ that is \textit{not} in the cache of $\opt$ after the request at time $t$. Ideally, this page $c(p)$ will be requested again before $p$, causing $\opt$ to incur a cache miss at that later time. The specific charging scheme is as follows:
    \begin{figure}[H]
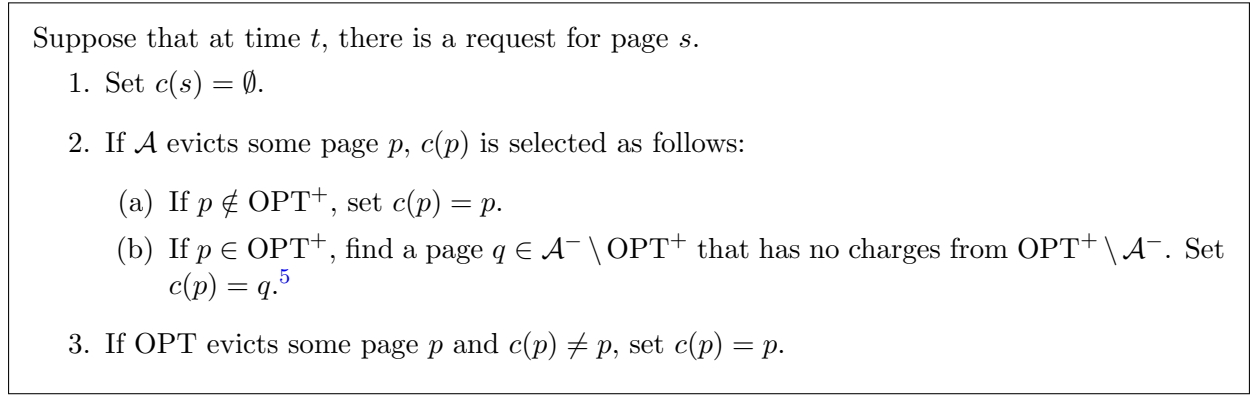

        \begin{framed}
        Suppose that at time $t$, there is a request for page $s$.
            \vspace{-2mm}
                \begin{enumerate}
                    \item \label{item:clear-charges} Set $c(s)=\emptyset$.
                    \item \label{item:assign-charge} If $\mcA$ evicts some page $p$, $c(p)$ is selected as follows:
                    \begin{enumerate}
                        \item \label{item:p-not-in-opt+} If $p \notin \opt^+$, set $c(p)=p$.
                        \item \label{item:p-in-opt+} If $p \in \opt^+$, find a page $q \in \mcA^- \setminus \opt^+$ that has no charges from $\opt^+ \setminus \mcA^-$. Set $c(p)=q$.\footnotemark
                    \end{enumerate}
                    \item \label{item:opt-eviction-charge-reassign} If $\opt$ evicts some page $p$ and $c(p) \neq p$, set $c(p)=p$.
                \end{enumerate}
            \vspace{-2mm}
        \end{framed}
        \vspace{-4mm}
        \caption{Charging Scheme}
        \label{fig:charging-scheme}
    \end{figure}
    \footnotetext{for any page $p$ that $\mcA$ might evict which satisfies the condition $p \notin \opt^+$, we make the \textit{same} choice of $q$.}
    At any time for any page $p$, $c(p)$ will either be $\emptyset$, or some (single) page. Furthermore, before the request at any time, $c(p) \neq \emptyset$ if and only if $p$ is not currently in the cache of $\mcA$.
    
    The first action (\Cref{item:clear-charges}) at the time of any page request is the clearing of charges: when a page $s$ is requested, if $s$ was giving a charge (either to itself or another page), it is duly cleared. If $\mcA$ suffers a cache miss, and evicts $p$, then our task is to find a page which is not going to be in the cache of $\opt$ just after time $t$, and assign a charge from $p$ to this page.
    
    If $p$ is not in $\opt^+$ (\Cref{item:p-not-in-opt+}), we assign $p$'s charge to itself. Otherwise, $p$ is in $\opt^+$ (i.e., $p$ is a page in $\mcA^- \cap \opt^+$), and we need to find a different page to charge. In this case, we will argue the existence of a page $q$ that satisfies the condition in \Cref{item:p-in-opt+}, for $p$ that fall into this case. Let us remove the common set $\mcA^- \cap \opt^+$ from each of $\mcA^-$ and $\opt^+$, and consider the sets $\mcA^- \setminus \opt^+$, $\opt^+ \setminus \mcA^-$. Note that both these sets have the same size, and $s$ is in $\opt^+ \setminus \mcA^-$, but not in $\mcA^- \setminus \opt^+$. Furthermore, in our initial step, we set $c(s)=\emptyset$. Thus, even if every page in $\opt^+ \setminus \mcA^-$ other than $s$ assigns a charge to a page in $\mcA^- \setminus \opt^+$, at least one page remains unassigned.
    We select this page (say the first one in a natural ordering) as $q$, the required charge recipient for $p$. Either way, note that $c(p)$ is not in $\opt^+$, and if $c(p)$ is requested before $p$, then $\opt$ incurs a cache miss. In this sense, $c(p)$ can be thought of as the “savior” page for $p$. %
    
    \Cref{item:opt-eviction-charge-reassign} simply reassigns the charge that a page gives if it is eventually evicted from $\opt$. 
    Specifically, $p$ may have initially set $c(p) = q \neq p$ when $\mcA$ evicted $p$, as $p$ still remained in $\opt$'s cache. However, if $\opt$ later evicts $p$, we can safely transfer the responsibility back from $q$ to $p$.

    The nice property about this charging scheme is that no page ever has more than $2$ charges on it at any time. Specifically, a page can hold at most one charge from itself and at most one charge from another page (because of \Cref{item:p-in-opt+} and \Cref{item:opt-eviction-charge-reassign}). 
    In fact, the only way a page can have two charges is if it first receives a charge from another page and later gets evicted by $\mcA$, assigning itself a self-charge.

    Now, fix a finite time horizon $T$: we will reason about the number of cache misses suffered by $\mcA$ and $\opt$ over the course of the $T$ page requests at $t=1,\dots,T$.
    First, let $\mcI$ denote the random variable that is the number of pages $s$ that were not present in the cache initially\footnote{Note that both $\opt$ and $\mcA$ start with the same initial cache.}, but were requested at some $t \le T$: $\opt$ suffers a cache miss for each of these pages.

    Next, let $r(t,p)$ denote the first time after $t$ that there is a request for page $p$. Define the indicator random variables:
    \begin{align}
        \alpha(t) &= \Ind[\text{$\mcA$ is forced to evict at time $t$}], \label{eqn:alpha}\\
        \beta(t) &= \Ind[\text{$\mcA$ is forced to evict at time $t$ and the page $\bp$ it evicts satisfies that $c(\bp)$\footnotemark\ is requested} \nonumber \label{eqn:beta}\\
        &\qquad\qquad\text{no later than $r(t,\bp)$}],
    \end{align}\footnotetext{While it is possible for charges to be reassigned, it suffices for the analysis to consider the indicator with $c(p)$ being the page that $p$ charges at the time of its eviction.}
    where $\bp$ is a random variable denoting the evicted page. 
    Note that
    \begin{align*}
        \cost(\mcA, T) &= \sum_{t \le T}\alpha(t).
    \end{align*}
    Observe that the indicator $\beta(t)$ represents a cache miss for $\opt$ (in the infinite sequence of requests) due to a request for the savior page $c(\bp)$. Note, however, that a request to the savior page $c(\bp)$ might occur \textit{after} the time horizon $T$. This can be addressed by counting the number of \textit{open} charges remaining after the request at time $T$. Specifically, let $\mcO$ denote the number of charges assigned by pages that are not in $\mcA$'s cache after the request at time $T$.
    Upon subtraction of $\mcO$, the quantity
    \begin{align*}
        \sum_{t \le T}\beta(t) - \mcO
    \end{align*}
    counts the number of cache misses suffered by $\opt$ due to requests to savior pages. A final detail is that a savior page could potentially have two charges on it, and hence we may count the same cache miss for $\opt$ twice in the above calculation. Concretely, suppose at time $t_1$, $p$ gets evicted by $\mcA$, and assigns a charge to $c(p)=q \neq p$. Then, suppose that at time $t_2 > t_1$ but $t_2 < r(t_1,p)$, $q$ is itself evicted by $\mcA$, resulting also in a charge $c(q)=q$. Now, suppose that $q$ gets requested after $t_2$ but before $r(t_1,p)$. In this case, we have that both $\beta(t_1)=1$ and $\beta(t_2)=1$, but these are really the same cache miss in $\opt$. Thus, we need to account for a possible double-counting of cache misses in $\opt$---one way to (very conservatively) do this is simply dividing the expression by 2. In total, we obtain that
    \begin{align}
        \cost(\opt, T) &\ge \frac{1}{2}\left(\sum_{t \le T}\beta(t) - \mcO \right) + \mcI, \label{eqn:lpr-final-accounting}
    \end{align}
    and hence
    \begin{align*}
        \E[\cost(\opt, T)] &\ge \frac{1}{2}\sum_{t \le T}\E[\beta(t)] + \E\left[\mcI - \frac{\mcO}{2}\right] \\
        &\ge \frac{1}{2}\sum_{t \le T}\E[\beta(t)] + \E\left[\mcI - \mcO\right].
    \end{align*}
    The expectation on the left side above is only with respect to the randomness in the sequence of page requests and $\opt$, whereas the expectation on the right side is over the randomness in the sequence of requests, $\opt$ as well as $\mcA$. 
    
    Now, fix any $t \le T$. Let $\sigma_{\le t}$ denote a fixed sequence of $t$ page requests, $\opt_{\le t}$ denote the execution of $\opt$ on these $t$ page requests, and $\mcA_{<t}$ denote the execution of $\mcA$ on all but the last of these page requests, such that $\sigma_{\le t}, \opt_{\le t}, \mcA_{< t}$ together result in a cache miss for $\mcA$ at time $t$. Then, we have that
    \begin{align*}
        \E[\beta(t)] &= \Pr[\text{$\mcA$ is forced to evict at time $t$ and the page $\bp$ it evicts satisfies that $c(\bp)$ is requested} \nonumber \\
        &\text{no later than $r(t,\bp)$}] \\
        &\hspace{-0.8cm}= \sum_{\sigma_{\le t}, \opt_{\le t}, \mcA_{<t}} \Pr[\sigma_{\le t}, \opt_{\le t}, \mcA_{<t}] \cdot \Pr\left[\text{page $\bp$ evicted by $\mcA$ at time $t$ satisfies that $c(\bp)$} \right. \\ &\qquad\qquad\qquad\qquad\qquad\qquad\qquad\qquad
        \left.\text{is requested no later than $r(t,\bp)$}\ \big|\ \sigma_{\le t}, \opt_{\le t}, \mcA_{<t}\right],
    \end{align*}
    Let us use the shorthand $\mcB:=\sigma_{\le t}, \opt_{\le t}, \mcA_{<t}$, and focus on the latter term in the summation.
    Note that once we have conditioned on $\mcB$, the configuration of $\mcA$'s cache (before the page request at $t$) is determined---denote its pages by $\mcA^-$.  At the last page request in $\sigma_{\le t}$, $\mcA$ suffers a cache miss, and chooses a page to evict from $\mcA^-$ from a (conditional) distribution $\mcP_t$ satisfying the property given in the lemma statement. Namely,
    \begin{align*}
        &\Pr\left[\text{page $\bp$ evicted by $\mcA$ at time $t$ satisfies that $c(\bp)$ is requested no later than $r(t,\bp)$}\ \big|\ \mcB\right] \\
        &= \sum_{p \in \mcA^-}\Pr_{\mcP_t}[p] \Pr\left[\text{$c(p)$ is requested no later than $r(t,p)$}\ \big|\ \mcB\right].
    \end{align*}
    But note that conditioning on $\sigma_{\le t}$, $\opt_{\le t}$ determines the cache of $\opt$ after time $t$: let its contents be denoted by $\opt^+$. Then, according to our charging scheme, for any $p \in \mcA^-$, $c(p)$ is as follows: if $p \notin \opt^+, c(p)=p$, whereas if $p \in \opt^+$, $c(p)=q$ for some fixed $q \in \mcA^-\setminus \opt^+$ that satisfies the condition in \Cref{item:p-in-opt+}. Note importantly that we make the same choice of $q$ for any $p$ in the latter case.
    \begingroup
    \allowdisplaybreaks
    \begin{align*}
        &\sum_{p \in \mcA^-}\Pr_{\mcP_t}[p] \Pr\left[\text{$c(p)$ is requested no later than $r(t,p)$}\ \big|\ \mcB\right] \\
        &= \sum_{p \in \mcA^-, c(p)=p}\Pr_{\mcP_t}[p]\underbrace{ \Pr\left[\text{$p$ is requested no later than $r(t,p)$}\ \big|\ \mcB\right]}_{=1\text{ since $p$ itself is the page requested at $r(t,p)$}} \\
        &+ \sum_{p \in \mcA^-, c(p)=q}\Pr_{\mcP_t}[p] \Pr\left[\text{$q$ is requested no later than $r(t,p)$}\ \big|\ \mcB\right] \\
        &\ge \sum_{p \in \mcA^-, c(p)=p}\Pr_{\mcP_t}[p] \Pr\left[\text{$q$ is requested no later than $r(t,p)$}\ \big|\ \mcB\right] \\
        &+ \sum_{p \in \mcA^-, c(p)=q}\Pr_{\mcP_t}[p] \Pr\left[\text{$q$ is requested no later than $r(t,p)$}\ \big|\ \mcB\right] \\
        &= \sum_{p \in \mcA^-}\Pr_{\mcP_t}[p] \Pr\left[\text{$q$ is requested no later than $r(t,p)$}\ \big|\ \mcB\right] \\
        &\ge \frac{1}{c},
    \end{align*}%
    \endgroup
    where in the last line, we used the property of the distribution $\mcP_t$ from the lemma statement. %
    Tracing backwards, we have obtained that
    \begin{align*}
        \E[\beta(t)] &\ge \sum_{\sigma_{\le t}, \opt_{\le t}, \mcA_{<t}} \Pr[\sigma_{\le t}, \opt_{\le t}, \mcA_{<t}] \cdot \frac{1}{c} = \frac{1}{c}\cdot \Pr[\text{$\mcA$ is forced to evict at time $t$}] = \frac{1}{c} \cdot \E[\alpha(t)].
    \end{align*}
    Finally, we observe that the random variable $\mcI-\mcO$ is always nonnegative. This is because, any page that is in the cache of $\mcA$ after the request at time $T$, or not in the cache of $\mcA$ after time $T$ but giving a charge, \textit{must necessarily} have either been in the initial cache, or must have been requested at some time $t \le T$. This implies that $k+\mcO \le k + \mcI$, which implies that $\mcI-\mcO \ge 0$. In total, 
    \begin{align*}
        \E[\cost(\opt, T)] \ge \frac12\sum_{t \le T}\E[\beta(t)] + \E\left[\mcI - \mcO\right]
        \ge \frac{1}{2c}\E\left[\sum_{t \le T}\alpha(t)\right] = \frac{1}{2c}\E[\cost(\mcA, T)]
    \end{align*}
    as required.    
\end{proof}

\begin{corollary}
    \label{corollary:dom-is-4-competitive}
    The dominating distribution algorithm $\mcA_\dom$ is 4-competitive against $\opt$.
\end{corollary}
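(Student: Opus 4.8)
The plan is to simply invoke Lemma~\ref{lem:lpr} with $c = 2$. To do this, I first verify that the hypothesis of the lemma is met by $\mcA_\dom$: whenever $\mcA_\dom$ suffers a cache miss, it evicts a page $p$ drawn from a distribution such that, for every page $q$ currently in the cache, the probability (over the choice of $p$ and the random request sequence ahead, conditioned on the most recently requested page) that $q$ is next requested no later than the next request for $p$ is at least $1/2$.

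This is precisely the content of Claim~\ref{claim:dominating-distribution-probability}. By construction, $\mcA_\dom$ evicts $p \sim \mu$, where $\mu$ is the dominating distribution satisfying \eqref{eqn:dominating-distribution-property}; such a $\mu$ exists and is efficiently computable by Theorem~\ref{thm:dominating-distribution-exists}. Claim~\ref{claim:dominating-distribution-probability} then asserts exactly that for every page $q$ in the cache, $\Pr_{p \sim \mu, M}[\,q \text{ is next requested no later than } p \mid s\,] \ge 1/2$, where $s$ is the most recently requested page. Hence $\mcA_\dom$ satisfies the hypothesis of Lemma~\ref{lem:lpr} with $c = 2$, and the lemma immediately gives that $\mcA_\dom$ is $2c = 4$-competitive against $\opt$.

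There is no genuine obstacle here: the corollary is an immediate consequence of Lemma~\ref{lem:lpr} and Claim~\ref{claim:dominating-distribution-probability}, with all of the substantive work residing in the charging-scheme analysis of Lemma~\ref{lem:lpr}. The only point worth checking carefully is that the probabilistic event in Claim~\ref{claim:dominating-distribution-probability} is taken over exactly the same source of randomness as the hypothesis of Lemma~\ref{lem:lpr} demands — namely the eviction draw together with the random future, conditioned on the most recent request — which is indeed the case.
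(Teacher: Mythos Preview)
Your proof is correct and follows exactly the paper's approach: the corollary is obtained by combining Claim~\ref{claim:dominating-distribution-probability} (which shows $\mcA_\dom$ satisfies the hypothesis with $c=2$) with Lemma~\ref{lem:lpr}.
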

\begin{proof}
    This follows from \Cref{lem:lpr} and \Cref{claim:dominating-distribution-probability}.
\end{proof}

\begin{remark}
    \label{remark:suboptimal-fifo}
    Consider setting $c=1$ in \Cref{lem:lpr}: this corresponds to $\mcA$ effectively being an \textit{offline} algorithm, which \textit{knows} which page in the cache is going to next be requested farthest in the future compared to all the other pages in the cache. We already know that such an algorithm is optimal, i.e., is 1-competitive against $\opt$. However, the guarantee given by \Cref{lem:lpr} for such an algorithm is still only $2$-competitiveness. This at least suggests that there is scope to improve the guarantee to $c$-competitiveness. %
\end{remark}

\section{Sources of Looseness in the Analysis}
\label{sec:looseness}
We systematically identify the sources of looseness in the above analysis with illustrative examples, before proceeding to individually tighten them in the subsequent section.

\subsection{A Conservative Approach to Handling Doubly-Charged Pages}
\label{sec:looseness-double-counting}

Recall that in the summation $\sum_{t \le T} \beta(t)$ above, a cache miss for $\opt$ may be double-counted if the same page is assigned two charges. Although one way to address this overcounting is to divide the summation by 2, this approach is overly conservative, as it \textit{under}counts every cache miss in $\opt$ that results from a request to a singly-charged page. For instance, consider the following situation:
\begin{figure}[H]
    \centering
    \includegraphics[scale=0.5]{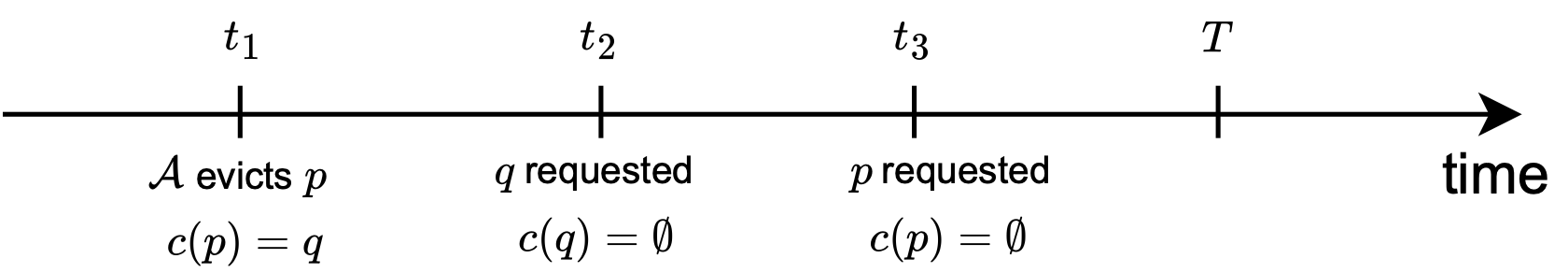}
    \caption{Request for $q$ (a singly charged page) at $t_2$ constitutes a unit cache miss for $\opt$, but the division by 2 undercounts this (and every other singly-charged cache miss in $\opt$).}
    \label{fig:double-counting}
\end{figure}
Here, at time $t_1$, $\mcA$ evicts a page $p$. Suppose that $p$ continues to live in the cache of $\opt$ after the request at $t_1$, and hence $c(p)$ is assigned to some $q$ in $\mcA$'s cache. Now, say that $q$ is requested at time $t_2$, which is before $t_3=r(t_1, p)$. Furthermore, assume that at $t_2$, $q$ only had the single charge on itself (by $p$), i.e., $q$ was not evicted by $\mcA$ in the time between $t_1$ and $t_2$. Then, $\beta(t_1)=1$, but we are wastefully dividing it by 2 in our calculation.

\subsection{Inexhaustive Clearing of Charges upon a Page Request}
\label{sec:looseness-uncleared-charges}

Consider the first item in the charging scheme (\Cref{item:clear-charges})---whenever a page $s$ is requested, any charges that $s$ might be \textit{giving} are cleared (i.e., $c(s)=\emptyset$). Intuitively, this is supposed to account for the fact that, while $s$ was holding some page $c(s)$ responsible for being requested before $s$ itself, either this did happen, in which case we happily clear the already-paid charge, or this did not quite happen and $s$ got requested before $c(s)$, in which case we should let go of the charge and focus on the future. However, consider instead the case where $s$ does not have a charge on any other page, but is itself the bearer of a charge by some other page, say $p$. In this case, $s$ has successfully paid the charge that was expected of it---\textit{but this charge would only be cleared upon the next request to $p$!} If it so happens that $p$ is next requested \textit{after} the time horizon $T$, then even if $s$ successfully paid the charge due to $p$, since this charge was not cleared when it was requested, it would be counted in $\mcO$ as part of the open charges post time $T$, and wastefully subtracted in the calculation. This is concretely illustrated in the situation below:
\begin{figure}[H]
    \centering
    \includegraphics[scale=0.45]{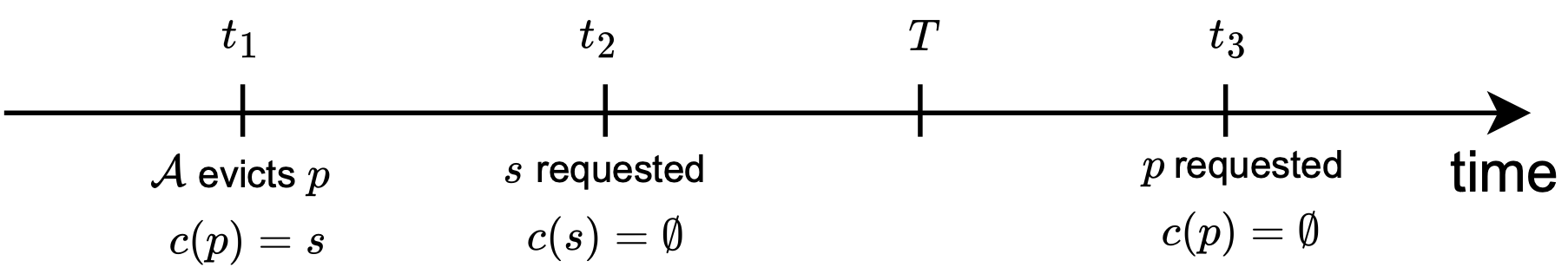}
    \caption{Here, $\beta(t_1)=1$, and $\opt$ suffers a cache miss at $t_2$. However, $p$ still holds a charge on $s$ at time $T$ because it is requested at $t_3 > T$. Thus, this charge is included in $\mcO$, canceling out the contribution due to the cache miss at $t_2$.}
    \label{fig:uncleared-charges}
\end{figure}
This suggests that whenever a page is requested, we should clear not only the charges it gives but also any charges it {\em bears}, preventing unnecessary inclusion in $\mcO$. This approach ensures that a page discards its imposed charge immediately upon ``paying it off" rather than with a delay.

\subsection{No Accounting for Uncharged, Non-first-timer Pages}
\label{sec:looseness-uncharged-non-first-time}

Finally, observe that in the accounting of cache misses for $\opt$, we only count those that occur due to requests to charged/savior pages (as counted by the $\beta(t)$'s), and those that occur due to first-time requests to pages not initially in the cache (as counted by $\mcI$). However, $\opt$ can also suffer cache misses due to a request to an \textit{uncharged} page. Namely, if there is a request to a page $q$ that was previously evicted by $\opt$, but at the time that it is requested, $q$ is bearing no charges at all, then the cache miss due to $q$ is not being counted in the calculation. This is illustrated by the scenario below: 
\begin{figure}[H]
    \centering
    \includegraphics[scale=0.45]{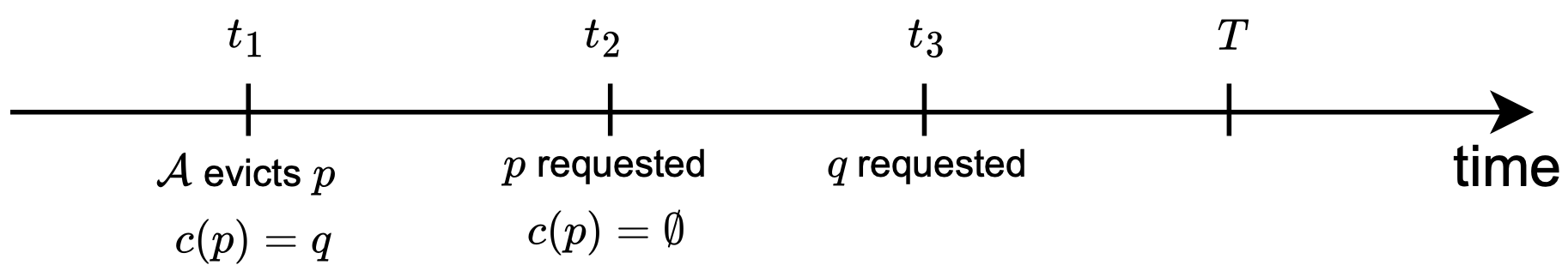}
    \caption{At $t_3$, $q$ does not have any charges on it, but still causes a cache miss for $\opt$.}
    \label{fig:uncharged-eviction}
\end{figure}
$\mcA$ evicts $p$ at time $t_1$ and assigns a charge to $q \neq p$, implying that $q$ is not in the cache of $\opt$ after the request at $t_1$ (but was instead previously evicted by $\opt$). Next, at $t_2$, $p$ is requested, and this clears the charge it had on $q$. Since $p$ got requested before $q$, $\beta(t_1)=0$. Then, at $t_3$, $q$ is requested---$q$ has no charges on it at this point. Notice that this still causes a cache miss for $\opt$ at $t_3$. However, this cache miss is not accounted for in our calculation, either by $\beta(t_1)$, or by $\mcI$ (since this is not the first time that $q$ is being brought into the cache).

\section{Tightening the Analysis to 2-competitiveness}
\label{sec:tight-analysis}
Having identified the loose ends in the analysis of \cite{lund1999paging} above, we are able to tighten their analysis and prove the following lemma:
\begin{lemma}
    \label{lem:tight-analysis}
    Let $\mcA$ be a paging algorithm. Suppose that whenever $\mcA$ suffers a cache miss, the page $p$ that it chooses to evict is chosen from a distribution such that the following property is satisfied: for every page $q$ in the cache, the probability (over the choice of $p$ and the random sequence ahead, conditioned on the most recently requested page) that $q$ is next requested no later than the next request for $p$ is at least $1/c$. Then $\mcA$ is $c$-competitive against $\opt$.
\end{lemma}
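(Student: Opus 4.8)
The plan is to establish Lemma~\ref{lem:tight-analysis} by reworking the charging scheme of Lemma~\ref{lem:lpr} so that it closes the three leaks isolated in Section~\ref{sec:looseness}, and then redoing the accounting of $\cost(\opt,T)$ without the lossy division by two. Concretely, I would modify Figure~\ref{fig:charging-scheme} as follows: (i) when a page $s$ is requested, in addition to clearing the charge $s$ gives, also clear any charge that $s$ \emph{bears}, so a savior is discharged the instant it is requested (Section~\ref{sec:looseness-uncleared-charges}); (ii) whenever $\opt$ evicts a page not currently carrying a charge, place a (self-)charge on it (Section~\ref{sec:looseness-uncharged-non-first-time}); and (iii) choose the savior at each $\mcA$-eviction, and govern how charges react to $\opt$'s moves, so that the charging function is injective at every instant \emph{and} no future cache miss of $\opt$ is ever ``predicted'' by two different charges (Section~\ref{sec:looseness-double-counting}). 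The target invariant is, roughly, that the pages carrying a charge are exactly the pages currently outside $\opt$'s cache --- this is what lets a fresh savior always be found --- and modifications (i) and (ii) are precisely what keep this invariant tight as pages are requested and as $\opt$ evicts.

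The first technical step is to check that this scheme is well-defined, i.e., that whenever $\mcA$ evicts a page $\bp$ (on a request for $s$) a valid savior $c(\bp)$ exists. When $\bp \notin \opt^+$ the natural choice is the self-charge $c(\bp)=\bp$; when $\bp \in \opt^+$ one reruns the counting argument of Lemma~\ref{lem:lpr}: $|\mcA^-\setminus\opt^+| = |\opt^+\setminus\mcA^-|$, the requested page $s$ sits in $\opt^+\setminus\mcA^-$ with its charges freshly cleared, so by the invariant and injectivity at least one page $q\in\mcA^-\setminus\opt^+$ carries no charge and can serve as $c(\bp)$. The genuinely delicate case is when $\bp\notin\opt^+$ yet $\bp$ already carries a charge (so a self-charge is impossible); here, as in the discussion accompanying Figure~\ref{fig:uncharged-eviction}, the page charging $\bp$ must itself lie in $\opt^+\setminus\mcA^-$, which frees up enough room in the above count to find a fresh $q\in\mcA^-\setminus\opt^+$ to use instead. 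In every case the savior of $\bp$ is either $\bp$ itself or a page that is \emph{determined} once the past (the request sequence, $\opt$'s execution, and $\mcA$'s execution up to but not including the current eviction) is fixed; this is exactly the property the conditional-expectation calculation at the end of the proof of Lemma~\ref{lem:lpr} uses, so that calculation carries over essentially unchanged and still gives $\E[\beta(t)] \ge \tfrac1c\,\E[\alpha(t)]$.

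Given the scheme, I would redo the accounting. Because the scheme never lets two charges predict the same miss of $\opt$, the sum $\sum_{t\le T}\beta(t)$ no longer over-counts, so there is no division by two; the only corrections are $\mcO$, the number of charges still open after the request at time $T$ (harvested misses falling after the horizon), and $\mcI$, the number of first-time requests to pages not in the initial cache. Thanks to modification (i), $\mcO$ no longer double-dips on charges that have already been paid off, and since every harvested miss is a \emph{re}-fault of $\opt$ (a savior is always a page $\opt$ has previously evicted, hence not a first-timer) it is disjoint from the $\mcI$ first-time faults. This yields the clean bound $\cost(\opt,T) \ge \sum_{t\le T}\beta(t) - \mcO + \mcI$, after which I would re-verify $\mcI \ge \mcO$ (every page counted in $k+\mcO$ --- in $\mcA$'s cache after time $T$, or attached to an open charge --- was in the initial cache or was requested by time $T$, so $k+\mcO\le k+\mcI$) and combine: $\E[\cost(\opt,T)] \ge \sum_{t\le T}\E[\beta(t)] + \E[\mcI-\mcO] \ge \tfrac1c\sum_{t\le T}\E[\alpha(t)] = \tfrac1c\,\E[\cost(\mcA,T)]$, i.e., $c$-competitiveness.

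I expect the main obstacle to be pinning down the charge-maintenance rules so that the ``no miss is predicted twice'' property provably holds: instantaneous injectivity alone is not quite enough, because a charge can be reassigned (e.g. when $\opt$ evicts the charger) and a ``stale'' prediction can then collide with a fresh one, so the rules governing $\opt$-evictions must dovetail with modifications (i) and (ii), and one must simultaneously preserve both the ``chargers $\leftrightarrow$ pages outside $\opt$'s cache'' invariant (needed for savior existence) and the determinacy of each savior given the past (needed for the $\E[\beta(t)]\ge\tfrac1c\E[\alpha(t)]$ step).
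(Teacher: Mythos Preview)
Your plan diverges from the paper's in a material way. The paper makes \emph{only} your modification (i) --- clearing, at each request to $s$, the charges $s$ bears as well as those it gives --- and otherwise leaves the scheme of Figure~\ref{fig:charging-scheme} untouched; in particular it still permits a page to be doubly charged and adds no rule when $\opt$ evicts an uncharged page. Rather than sealing the leaks at the scheme level, the paper tracks them explicitly: it introduces $\mcD = \sum_{t\le T}\Ind[\text{page requested at }t\text{ is doubly charged}]$ and $\mcU = \sum_{t\le T}\Ind[\text{page requested at }t\text{ is an uncharged, previously-}\opt\text{-evicted page}]$, writes $\cost(\opt,T)\ge \sum_t\beta(t)-\mcD-\mcO+\mcU$ (note there is also no $\mcI$ term here), and then proves that the random variable $\mcI-\mcD-\mcO+\mcU$ is nonnegative at every time step. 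That last step is the real work: it is a case analysis on the state of the requested page (doubly/singly/un-charged, in or out of each cache, giving a charge or not), together with a separate claim that whenever the potential is about to drop by $1$, it was strictly positive just before. The $\E[\beta(t)]\ge\tfrac1c\E[\alpha(t)]$ calculation is reused verbatim from Lemma~\ref{lem:lpr}.

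Your alternative --- enforcing injectivity so that $\mcD$ vanishes, and placing self-charges on $\opt$-evictions so that no $\mcU$ is needed --- has a genuine gap beyond the one you flag. First, modification (ii) does not actually maintain your target invariant: if $c(\bp_1)=q$ at time $t_1$ and $\opt$ later evicts $q$, (ii) does nothing since $q$ already bears a charge; if $\bp_1$ is then requested, Item~\ref{item:clear-charges} clears $c(\bp_1)$ and now $q$ is outside $\opt$'s cache bearing no charge, exactly the Section~\ref{sec:looseness-uncharged-non-first-time} situation you were trying to eliminate. Second, even where (ii) does fire, the self-charge it places is not tied to any $\mcA$-eviction and so never contributes to any $\beta(t)$; it therefore does not ``count'' the uncharged $\opt$-miss, it only occupies the slot that your savior-finding step needs (under your invariant \emph{every} page in $\mcA^-\setminus\opt^+$ bears a charge, so you must overwrite one, and the overwrite rules are exactly the unspecified modification (iii)). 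Third, instantaneous injectivity does not suffice for ``no miss predicted twice'': since $\beta(t)$ is defined using the savior assigned \emph{at the moment of eviction}, reassignments (Item~\ref{item:opt-eviction-charge-reassign}, or your overwrite rules) can leave two distinct $\beta$'s pointing at the same future $\opt$-miss even though the live charging map was always one-to-one. The paper sidesteps all of this by not chasing injectivity and instead proving the correction terms $\mcD$ and $\mcU$ cancel against $\mcI$ and $\mcO$.
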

\begin{proof}
    We make one small change (highlighted in green) to \Cref{item:clear-charges} in the charging scheme (\Cref{fig:charging-scheme}) from the analysis of \cite{lund1999paging}---whenever a new page is requested, any charges that this page might be giving are cleared, but also any charges that other pages might be having on the requested page are also cleared. This fixes the issue regarding uncleared charges discussed in \Cref{sec:looseness-uncleared-charges}.
    \begin{figure}[H]
        \begin{framed}
        Suppose that at time $t$, there is a request for page $s$.
            \vspace{-2mm}
                \begin{enumerate}
                    \item \label{item:updated-clear-charges} \begin{enumerate}
                        \item \label{item:updated-clear-charges-giving} Set $c(s)=\emptyset$.
                        \item \label{item:updated-clear-charges-bearing} \green{For any page $p$ that has $c(p)=s$, set $c(p)=\emptyset$.}
                    \end{enumerate}
                    \item \label{item:updated-assign-charge} If $\mcA$ evicts some page $p$, $c(p)$ is selected as follows:
                    \begin{enumerate}
                        \item \label{item:updated-p-not-in-opt+} If $p \notin \opt^+$, set $c(p)=p$.
                        \item \label{item:updated-p-in-opt+} If $p \in \opt^+$, find a page $q \in \mcA^- \setminus \opt^+$ that has no charges from $\opt^+ \setminus \mcA^-$. Set $c(p)=q$.
                    \end{enumerate}
                    \item \label{item:updated-opt-eviction-charge-reassign} If $\opt$ evicts some page $p$ and $c(p) \neq p$, set $c(p)=p$.
                \end{enumerate}
            \vspace{-2mm}
        \end{framed}
        \vspace{-4mm}
        \caption{Updated Charging Scheme}
        \label{fig:updated-charging-scheme}
    \end{figure}
    Let $\alpha(t)$ and $\beta(t)$ be the same random variables as defined in \eqref{eqn:alpha}, \eqref{eqn:beta} in the proof of \Cref{lem:lpr}. Further, let $\mcI$ and $\mcO$ be the same random variables as defined there as well. Note that even with the slightly-changed charging scheme, $\mcI$ remains quantitatively the same; the random variable $\mcO$ however changes---in particular, it potentially becomes a smaller number, because we are clearing charges more aggressively. The random variable
    \begin{align*}
        \sum_{t \le T}\beta(t) - \mcO
    \end{align*}
    nevertheless still counts the cache misses suffered by $\opt$ due to requests to savior pages, while potentially double-counting a few misses. However, to account for this double-counting, instead of dividing every miss by 2 (which was the issue discussed in \Cref{sec:looseness-double-counting}), we explicitly keep track if some $\beta(t)$ corresponds to a request to a doubly-charged page, and subtract it from our calculation. Formally, let
    \begin{align}
        \mcD = \sum_{t \le T}\Ind[\text{Page $\bp$ requested at time $t$ has two charges on it}]. \label{eqn:D}
    \end{align}
    Then, the quantity
    \begin{align*}
        \sum_{t \le T}\beta(t) - \mcD - \mcO
    \end{align*}
    counts the cache misses suffered by $\opt$ due to requests to savior pages more precisely, and without double-counting any miss. This crucially allows us to avoid an unnecessary factor of $2$.

    Lastly, in order for the analysis to go through, we have to also fix the final issue regarding requests to uncharged pages described in \Cref{sec:looseness-uncharged-non-first-time}. We simply do this by keeping track of an additional quantity that counts this, and add it to our calculation. Namely, let
    \begin{align}
        \mcU = \sum_{t \le T}\Ind[&\text{Page $\bp$ requested at time $t$ does not exist in $\opt$'s cache at this time because it was} \nonumber \\ 
        &\text{previously evicted by $\opt$, and $\bp$ has no charges on it}]. \label{eqn:U}
    \end{align}
    Then, combining all of the above, we have that
    \begin{align}
        \cost(\opt, T) &\ge \sum_{t \le T}\beta(t) - \mcD - \mcO + \mcU. \label{eqn:updated-final-accounting}
    \end{align}
    Comparing this to the accounting in \eqref{eqn:lpr-final-accounting}, it at least seems plausible that by not halving, we might be able to save a factor of 2. Namely, taking expectations, we obtain that
    \begin{align*}
        \E[\cost(\opt, T)] &\ge \sum_{t \le T} \E[\beta(t)] + \E[\mcI - \mcD - \mcO + \mcU] \\
        &\ge \frac{1}{c}\sum_{t \le T}\E[\alpha(t)] + \E[\mcI - \mcD - \mcO + \mcU] \\
        &= \frac{1}{c}\E[\cost(\mcA, T)] + \E[\mcI - \mcD - \mcO + \mcU],
    \end{align*}
    where the first inequality follows from the same analysis that we did in the proof of \Cref{lem:lpr}. It remains to argue that the quantity $\E[\mcI - \mcD - \mcO + \mcU]$ is nonnegative. Note that the random variable $\mcI - \mcD - \mcO + \mcU$ starts out being $0$ before the very first request at time $t=1$. We will argue that it always stays nonnegative thereafter via the following two claims.

    \begin{claim}
        \label{claim:potential-decrease-characterization}
         If the page $s$ requested at time $t$ satisfies the following condition: $s$ exists in $\opt$'s cache and $s$ does not exist in $\mcA$'s cache and $s$ is not giving a charge and $s$ is not bearing any charges, then the random variable $\mcI-\mcD-\mcO+\mcU$ decreases by 1 from $t$ to $t+1$. Furthermore, if the page $s$ requested at time $t$ does not satisfy this condition, then $\mcI-\mcD-\mcO+\mcU$ either increases or stays the same from $t$ to $t+1$.
    \end{claim}
    \begin{proof}
        We first show that if the page $s$ requested at time $t$ satisfies the condition, then $\mcI - \mcD - \mcO + \mcU$ decreases by 1. Since $s$ is in $\opt$'s cache, the request to $s$ does not change $\mcU$ (see the definition in \eqref{eqn:U}). Moreover, $s$ being in $\opt$'s cache implies it was either in the initial cache or previously requested, so $\mcI$ also remains unchanged. Similarly, because $s$ is not doubly charged, $\mcD$ remains unchanged. Thus, we only need to account for the change in $\mcO$.

        Because $s$ is neither giving nor bearing any charges, both \Cref{item:updated-clear-charges-giving} and \Cref{item:updated-clear-charges-bearing} cause no change in $\mcO$. Finally, because $s$ does not exist in $\mcA$'s cache and results in a cache miss, \Cref{item:updated-assign-charge} creates a new charge, causing $\mcO$ to increase by 1. In total, $\mcI-\mcD-\mcO+\mcU$ decreases by 1.

        \smallskip
        Next, we will show that if $s$ does not satisfy the condition, then $\mcI-\mcD-\mcO+\mcU$ either increases or stays the same. Towards this, consider each of the following cases: 
        
        \medskip\noindent\underline{Case 1: $s$ is a doubly-charged page.} \\
        In this case, $\mcD$ increases by 1. Note that one of the charges on $s$ is by $s$ itself, and the other charge is by some other page $q$. In particular, $s$ is not \textit{giving} a charge to a page other than itself. Thus, when $s$ is requested, \Cref{item:updated-clear-charges-giving} and \Cref{item:updated-clear-charges-bearing} ensure that the charge by $s$ on itself as well as the charge on $s$ by $q$ are \textit{both} dropped.\footnote{Note how \Cref{item:updated-clear-charges-bearing} was necessary to ensure this.} However, note also that $s$ is not in the cache of $\mcA$ (since it has a charge on itself, it was previously evicted by $\mcA$), and hence this request has caused a cache miss in $\mcA$, resulting in the creation of a new charge in \Cref{item:updated-assign-charge}. \Cref{item:updated-opt-eviction-charge-reassign} can only reassign a charge, and thus, the net change in $\mcO$ is $-1$. Finally, $\mcI$ and $\mcU$ remain the same. Therefore, the overall change in $\mcI - \mcD - \mcO + \mcU$ is 0. \medskip
        
        \noindent\underline{Case 2: $s$ is a singly-charged page.} \\
        This means that either $s$ is not in the cache of $\mcA$ and is charging itself, or $s$ is in the cache of $\mcA$ and is bearing a charge given by some other page $p$ at the time of its eviction. 
        
        In the former case, observe that none of $\mcI$, $\mcD$ or $\mcU$ are affected. The clearing of charges in \Cref{item:updated-clear-charges} causes $\mcO$ to decrease by 1, and the cache miss causes the creation of a new charge in \Cref{item:updated-assign-charge}. In total, $\mcI-\mcD-\mcO+\mcU$ stays unchanged.
        In the latter case, observe that the charge on $s$ is cleared in \Cref{item:updated-clear-charges-bearing}, causing $\mcO$ to decrease by 1. Furthermore, because $s$ is in the cache of $\mcA$, there is no creation of a new charge in \Cref{item:updated-assign-charge}. Thus, $\mcI-\mcD-\mcO+\mcU$ increases by 1.

        \medskip\noindent\underline{Case 3: $s$ is not in $\opt$'s cache.} \\
        If $s$ is either singly or doubly-charged, we fall under Case 1 or 2. If $s$ has no charges on it, then:
        
        \smallskip         \noindent\underline{Subcase 3a: $s$ has never been requested before.} \\
        In this case, $\mcI$ increases by 1. Also, $s$ cannot be giving or bearing any charges; thus \Cref{item:updated-clear-charges-giving} and \Cref{item:updated-clear-charges-bearing} cause no change to $\mcO$. However, the request to $s$ causes a cache miss in $\mcA$, resulting in an eviction, and the creation of a new charge. Thus, $\mcO$ increases by 1. Additionally, $\mcD$ and $\mcU$ remain the same. Thus, the net change in $\mcI - \mcD - \mcO + \mcU$ is 0.
        
        \smallskip         \noindent\underline{Subcase 3b: $s$ has been requested before.} \\
        This means that $s$ was previously in $\opt$'s cache at some time but was since evicted. Also, $s$ has no charges on it. Thus, $\mcU$ increases by 1, while $\mcI$ and $\mcD$ remain unchanged. It remains to reason about $\mcO$. Because $s$ has no charges on it, \Cref{item:updated-clear-charges-bearing} causes no change in $\mcO$.
        \Cref{item:updated-clear-charges-giving} either decreases $\mcO$ by 1 or causes no change to it.
        While the request to $s$ can cause a cache miss to $\mcA$, this can only result in the creation of a single new charge, and $\mcO$ can increase by at most 1 due to this. In total, $\mcI-\mcD-\mcO+\mcU$ either increases %
        or stays the same. 
        
        \medskip
        \noindent\underline{Case 4: $s$ is in $\mcA$'s cache.} \\
        This means that $s$ is giving no charges. Then, either $s$ is singly-charged or has no charges on it. It cannot be doubly-charged because it would have to be out of $\mcA$'s cache for that. If it is singly-charged, we fall under Case 2. If it has no charges on it, we reason as follows: both \Cref{item:updated-clear-charges} and \Cref{item:updated-assign-charge} leave $\mcO$ unaffected. Moreover, $\mcI$ and $\mcD$ remain unaffected. Finally, depending on whether $s$ is or isn't in $\opt$'s cache, $\mcU$ either stays the same or increases. Hence, $\mcI-\mcD-\mcO+\mcU$ either stays the same or increases. 
        
        \medskip
        \noindent\underline{Case 5: $s$ is giving a charge.} \\
        If $s$ has any charges on it, we fall under Case 1 or 2. So we assume that $s$ has no charges on it. This means that $s$ is not in $\mcA$'s cache, but is still in $\opt$'s cache. Note then that $\mcI, \mcD, \mcU$ remain unchanged. We reason about the change in $\mcO$ as follows: \Cref{item:updated-clear-charges-giving} clears the charge that $s$ is giving and decreases $\mcO$ by 1, whereas \Cref{item:updated-assign-charge} creates a new charge and increases $\mcO$ by 1. In total, $\mcI-\mcD-\mcO+\mcU$ remains unchanged.

        Thus, in any way that the requested page $s$ might not satisfy the condition, $\mcI-\mcD-\mcO+\mcU$ either increases or stays the same, concluding the proof.
    \end{proof}

    \begin{claim}
        If the page $s$ requested at time $t$ satisfies the condition: $s$ exists in $\opt$'s cache and $s$ does not exist in $\mcA$'s cache and $s$ is not giving a charge and $s$ is not bearing any charges, then $\mcI-\mcD-\mcO+\mcU$ is strictly positive just before this request.
    \end{claim}
    \begin{proof}
        Consider any time $t$ where the requested page $s$ satisfies the condition. We can associate to the page $s$ two past events occurring at times $t_1$ and $t_2$ (where $t_1 < t_2 < t$) such that: (1) $\mcA$ evicts $s$ at $t_1$, but $s$ continues to live in $\opt$'s cache, resulting in $s$ giving a charge to some other page $q$ in $\mcA$'s cache (\Cref{item:p-in-opt+}), and (2) $q$ gets requested at $t_2$, before the request to $s$ at $t$, thereby clearing the charge by $s$ on $q$ (as per \Cref{item:updated-clear-charges-bearing}). Observe that these two past events need to necessarily occur for $s$ to satisfy the condition. Now observe that when $q$ gets requested at $t_2$, $q$ is the bearer of a \textit{single} charge. Thus, over the course of this request to $q$, $\mcI, \mcD, \mcU$ remain unchanged. Because $s$'s charge on $q$ gets cleared, $\mcO$ decreases by $1$. Finally, because $q$ is in $\mcA$'s cache at this time, no new charges are added to $\mcO$. Thus, $\mcI - \mcD - \mcO + \mcU$ strictly \textit{increases} by 1 at step (2). Also, note that for each distinct time step where the requested page $s$ satisfies the condition, there is a distinct associated (past) step (2).

        Now consider the \textit{first} time $t_0$ where the requested page $s$ satisfies the condition. Then, by \Cref{claim:potential-decrease-characterization}, at every previous time step, $\mcI - \mcD - \mcO + \mcU$ either increased or remained the same. Given that $\mcI - \mcD - \mcO + \mcU$ started out being 0, and recalling that step (2) (which happened before $t_0$) caused a strict \textit{increase} in the quantity, we have that $\mcI - \mcD - \mcO + \mcU$ is strictly positive at $t_0$.

        Just after the request at $t_0$, $\mcI - \mcD - \mcO + \mcU$ decreases by 1 (by \Cref{claim:potential-decrease-characterization}), and is now only guaranteed to be nonnegative, instead of positive. Then, consider the next time $t$ when a page $s$ satisfying the condition is requested. We can again trace its associated (distinct) step (2) that happened in the past. If this happened before $t_0$, then $\mcI - \mcD - \mcO + \mcU$ stayed positive after the request at $t_0$. Alternatively, if this happened in between $t_0$ and $t$, $\mcI - \mcD - \mcO + \mcU$ still turns positive (if it ever became zero at all). In either case, $\mcI - \mcD - \mcO + \mcU$ is positive before the request at $t$. The claim follows by induction.
    \end{proof}

    The above two claims establish that if $\mcI - \mcD - \mcO + \mcU=0$ at any time $t$, then $\mcI - \mcD - \mcO + \mcU$ cannot decrease at time $t+1$. Furthermore, if at all $\mcI - \mcD - \mcO + \mcU$ does decrease (from being a positive number), it only decreases by 1. Together, this means that $\mcI - \mcD - \mcO + \mcU$ is always nonnegative, %
    and hence it is nonnegative in expectation. This concludes the proof of \Cref{lem:tight-analysis}
\end{proof}

\begin{corollary}
    \label{corollary:dom-is-2-competitive}
    The dominating distribution algorithm $\mcA_\dom$ is 2-competitive against $\opt$.
\end{corollary}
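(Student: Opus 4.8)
The plan is to obtain this as an immediate corollary of \Cref{lem:tight-analysis}, in exact parallel to how \Cref{corollary:dom-is-4-competitive} followed from \Cref{lem:lpr}. Concretely, I would instantiate \Cref{lem:tight-analysis} with $c = 2$ and with $\mcA = \mcA_\dom$, so that it remains only to check that $\mcA_\dom$ meets the hypothesis of that lemma for this value of $c$.

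First I would recall that $\mcA_\dom$ is well-defined: by \Cref{thm:dominating-distribution-exists}, a distribution $\mu$ over the pages currently in the cache satisfying the dominating-distribution property \eqref{eqn:dominating-distribution-property} always exists (and is computable in $\poly(k)$ time), so on every cache miss $\mcA_\dom$ does indeed evict a page $p \sim \mu$. Next I would invoke \Cref{claim:dominating-distribution-probability}: letting $s$ be the most recently requested page (the one causing the miss), for every fixed page $q$ in the cache we have
\[
\Pr_{p \sim \mu,\, M}\big[\text{$q$ is next requested no later than the next request for $p$} \;\big|\; s\big] \ge \frac12 .
\]
This is exactly the property required in the hypothesis of \Cref{lem:tight-analysis} with $1/c = 1/2$, i.e.\ with $c = 2$. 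Applying the lemma then yields that $\mcA_\dom$ is $2$-competitive against $\opt$, as desired.

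Since all of the real work has already been done --- \Cref{claim:dominating-distribution-probability} is just an unwinding of the definition of $\mu$, and \Cref{lem:tight-analysis} is the tightened charging argument --- there is essentially no obstacle here. The only points to be careful about are: (i) matching the conditioning exactly, noting that both the claim and the lemma condition on the most recently requested page; and (ii) observing that the guarantee holds uniformly over \emph{all} valid dominating distributions, since \Cref{lem:tight-analysis} uses only the $1/c$ lower bound and nothing else about $\mu$. One might additionally remark, in the spirit of \Cref{remark:suboptimal-fifo}, that $2$ is the best bound obtainable from this charging-scheme style of argument without new ideas, which is part of what motivates the lower bound in \Cref{thm:dominating-distribution-lower-bound}.
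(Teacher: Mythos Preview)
Your proposal is correct and follows exactly the paper's approach: the paper's proof is the single sentence ``This follows from \Cref{lem:tight-analysis} and \Cref{claim:dominating-distribution-probability},'' and you have simply spelled out that instantiation (with $c=2$) in more detail. The additional remarks you make about well-definedness via \Cref{thm:dominating-distribution-exists} and uniformity over all dominating distributions are fine but not needed for the proof itself.
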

\begin{proof}
    This follows from \Cref{lem:tight-analysis} and \Cref{claim:dominating-distribution-probability}.
\end{proof}

\begin{remark}
    \label{remark:optimal-fifo}
    As in \Cref{remark:suboptimal-fifo}, if we set $c=1$ in \Cref{lem:tight-analysis}, our lemma says that a policy that has essentially seen the future and evicts the page that is next scheduled to be requested latest is 1-competitive against $\opt$, i.e., it is optimal. Thus, our analysis, while establishing the best-possible guarantee for algorithms satisfying the condition of the lemma, additionally recovers an alternate, charging-scheme-based proof of the optimality of the Farthest-in-Future eviction policy.
\end{remark}

\begin{remark}
    \label{remark:2-correct-answer-for-dom}
    The factor of 4 in the analysis of \cite{lund1994ip} constituted a factor of 2 arising due to doubly-charged pages, and a factor of 2 arising from the property of the dominating distribution algorithm. While we got rid of the first factor, the second factor appears to be necessary, stemming from an inherent property (\Cref{eqn:dominating-distribution-property}) of the algorithm itself. It would indeed be very interesting to see if this is not the case, and if the upper bound could further be improved from 2.
\end{remark}

\Cref{lem:tight-analysis} also allows us to improve the approximation guarantee for another intuitive \textit{deterministic} algorithm considered by \cite{lund1999paging}, namely the \textit{median} algorithm, in the setting where one additionally assumes that the inter-arrival times between consecutive requests for a page are independent across all the pages.  At any cache miss, the median algorithm evicts the page in cache that has the largest median time of next request. For deterministic algorithms satisfying the condition of \Cref{lem:lpr}, \cite{lund1999paging} employ a slightly more specialized analysis of the charging scheme in \Cref{fig:charging-scheme} (Lemma 2.3 in \cite{lund1999paging}) to obtain a $(c+1)$-competitive guarantee against $\opt$ (instead of $2c$-competitiveness). Thereafter, by arguing that the median algorithm satisfies the condition with $c=4$, (Theorem 2.4 in \cite{lund1999paging}), they are able to show that the median algorithm is 5-competitive against $\opt$. Our tighter analysis in \Cref{lem:tight-analysis} applies to any algorithm (deterministic/randomized), and hence also improves the guarantee for deterministic algorithms obtained by \cite{lund1999paging}. More importantly, it improves the performance guarantee that we can state for the median algorithm.

\begin{corollary}
    \label{corollary:median-is-4-competitive}
    The median algorithm is 4-competitive against $\opt$, under the assumption of independent inter-arrival page times.
\end{corollary}

\section{Lower bound for Dominating Distribution Algorithms}
\label{sec:lb}

In this section, we show that there exist problem instances where a dominating distribution algorithm\footnote{Note that we cannot hope to prove a lower bound that applies to \textit{all} dominating distribution algorithms, since the optimal Farthest-in-Future algorithm is technically also a dominating distribution algorithm.} can provably be at least $c$ times worse than $\opt$, for $c \ge 1.5907$. We start with a warm-up analysis that captures the central idea behind the lower bound, and shows a bound of 1.5, before generalizing the analysis to improve the bound. Our lower bound instances throughout are small and simple: they comprise of a Markov chain on $n=3$ pages with a cache size $k=2$.

\subsection{Warm-up: A 1.5-competitive Lower Bound}
\label{sec:1.5-lb}

Consider a simple setting with 3 total pages, and a cache of size 2. At each time step independently, page 1 is requested with probability $1-\eps$, and page 2 and 3 are requested each with probability $\eps/2$. This corresponds to the Markov chain with the transition matrix
\begin{align*}
    \begin{bmatrix}
        1-\eps & \eps/2 & \eps/2 \\
        1-\eps & \eps/2 & \eps/2 \\
        1-\eps & \eps/2 & \eps/2
    \end{bmatrix}.
\end{align*}

We start with the initial cache being $[1,2]$. The reference algorithm $\mcA_{\rf}$ we will compete against always keeps page 1 in its cache. If it suffers a cache miss, it swaps out the other page with the requested page. At any time $t$, the probability that $\mcA_{\rf}$ suffers a miss is therefore exactly $\eps/2$, and hence the expected total number of misses through $T$ timesteps is $\eps T/2$.

Now, consider $\mcA_{\dom}$. We want to capitalize on the fact that $\mcA_\dom$ evicts page 1 with some positive probability, in contrast to $\mcA_\rf$. Namely, observe that 
\begin{align}
    \Pr[\text{$\mcA_\dom$ suffers cache miss at time $t$}] &= \Pr[\text{$1$ in cache of $\mcA_\dom$ at $t$}] \cdot \frac{\eps}{2} + \nonumber\\
    & \Pr[\text{$1$ not in cache of $\mcA_\dom$ at $t$}] \cdot (1-\eps) \nonumber \\
    &= p_t \cdot \frac{\eps}{2} + (1-p_t) \cdot (1-\eps), \label{eqn:misses-A-dom}
\end{align}
where we denote $p_t = \Pr[\text{$1$ in cache of $\mcA_\dom$ at $t$}]$. The latter probability is non-zero for $\mcA_\dom$, which already makes its expected total number of misses larger than that of $\mcA_\rf$. We will therefore aim to maximize this difference.

Observe that 
\begin{align*}
    p_t &= \Pr[\text{$1$ requested at $t-1$}] + \Pr[\text{$1$ in cache of $\mcA_\dom$ at $t-1$}] \cdot \Pr[\text{$1$ not requested at $t-1$}] \cdot \\
    &\Pr[\text{$1$ not evicted by $\mcA_\dom$ at $t-1 ~\big|$ $1$ not requested, $1$ in cache of $\mcA_\dom$ at $t-1$}] \\
    &= 1-\eps + p_{t-1} \cdot \eps \cdot \Pr[\text{$1$ not evicted by $\mcA_\dom$ at $t-1 ~\big|$ $1$ not requested, $1$ in cache of $\mcA_\dom$ at $t-1$}].
\end{align*}
We can analytically calculate the last probability. Say the cache of $\mcA_\dom$ at $t-1$ is, without loss of generality, $[1,2]$. Given that $1$ is not requested, we have that either of $2$ or $3$ are requested, each with probability $1/2$. If $2$ is requested, there is no cache miss. If on the other hand, $3$ is requested, $\mcA_\dom$ calculates a dominating distribution $\mu$ over $[1,2]$. This can be easily calculated. Let $\alpha(p<q)$ again be the probability that $p$ is requested next before $q$ (conditioned on the current page, which in this case, does not matter since the distributions at every time step are independent and identical). We have that
\begin{align*}
    &\alpha(1<1) = \alpha(2<2) = 0, \\
    &\alpha(2<1) = \eps/2 + \eps/2 \cdot \alpha(2<1)
    \implies \quad \alpha(2<1) = \frac{\eps/2}{1-\eps/2}, \\
    &\alpha(1<2) = 1-\eps + \eps/2 \cdot \alpha(1<2)
    \implies \quad \alpha(1<2) = \frac{1-\eps}{1-\eps/2}. 
\end{align*}
The condition for the dominating distribution is that, for every fixed page $q$ in the cache,
\begin{align*}
    \E_{p \sim \mu}[\alpha(p<q)] \le 1/2.
\end{align*}
Letting $\mu(1)=x$ so that $\mu(2)=1-x$, this translates to
\begin{align*}
    &(1-x)\cdot \alpha(2<1) \le 1/2, \quad x\cdot \alpha(1<2) \le 1/2 \\
    \implies \qquad& \frac{3\eps-2}{2\eps} \le x \le \frac{2-\eps}{4-4\eps}.
\end{align*}
We want $x$ to be large as possible, because we want $\mcA_\dom$ to evict $1$ with the highest feasible probability, so that it will suffer a lot of cache misses. Thus, we set $x = \frac{2-\eps}{4-4\eps}$. This gives us that,
\begin{align*}
    &\Pr[\text{$1$ not evicted by $\mcA_\dom$ at $t-1 ~\big|$ $1$ not requested, $1$ in cache of $\mcA_\dom$ at $t-1$}] \\
    &= \frac12 + \frac{1}{2} \cdot (1-x) = 1-\frac{x}{2} = \frac{6-7\eps}{8-8\eps}.
\end{align*}
In total, we get that
\begin{align*}
    p_t &= \underbrace{1-\eps}_{c} + p_{t-1}\cdot \underbrace{\frac{\eps(6-7\eps)}{8-8\eps}}_{d}. 
\end{align*}
Unrolling the recurrence, and using that $p_{1}=1$, we obtain that for $t \ge 2$,
\begin{align*}
    p_t &= c(1+d+\dots+d^{t-2}) + d^{t-1} = c\cdot \frac{1-d^{t-1}}{1-d} + d^{t-1}.
\end{align*}
Summing up $\eqref{eqn:misses-A-dom}$ until $T$, and substituting $p_t$ from above, we get that
\begin{align*}
    \E[\cost(\mcA_\dom, T)] &= T(1-\eps) + \left(\frac{3\eps}{2}-1\right)\sum_{t=1}^{T}p_t \\
    &= T(1-\eps) + \left(\frac{3\eps}{2}-1\right)\left[\frac{c}{1-d}\left(T-\frac{1-d^T}{1-d}\right)+\frac{1-d^T}{1-d}\right].
\end{align*}
Taking the limit as $T \to \infty$, we get that
\begin{align*}
    \lim_{T \to \infty}\frac{\E[\cost(\mcA_\dom, T)]}{\E[\cost(\mcA_\rf, T)]} &= \frac{1-\eps+\left(\frac{3\eps}{2}-1\right)\left(\frac{c}{1-d}\right)}{\eps/2}.
\end{align*}
Substituting the values of $c$ and $d$, and taking the limit as $\epsilon \to 0$, the ratio above converges to $1.5$.

\subsection{A Slightly More General Analysis that Gives a Better Lower Bound}
\label{sec:1.5907-lb}

In the instance above, instead of assigning $\eps/2$ mass equally to pages 2 and 3, we can also imagine splitting the $\eps$ mass unequally between these pages. Namely, let the underlying distribution on the pages be such that, at each time step, page 1 is requested with probability $1-\eps$, page 2 is requested with probability $\eps_1$ and page 3 is requested with probability $\eps-\eps_1$, where the regime we want to keep in mind is $1-\eps \ge \eps_1 \ge \eps-\eps_1$. Note that $\eps_1=\eps/2$ corresponds to the instance above. Let the initial cache be $[1,2]$.

The reference algorithm $\mcA_\rf$ still always maintains page 1 in its cache. Let
\begin{align*}
    p_t &= \Pr[\text{$\mcA_\rf$'s state is $[1,2]$ at $t$}].
\end{align*}
We have that
\begin{align*}
    p_t = p_{t-1} \cdot (1-\eps+\eps_1) + (1-p_{t-1}) \cdot \eps_1 
    = \eps_1 + p_{t-1} \cdot (1-\eps) 
    = \eps_1 \left(\frac{1-(1-\eps)^{t-1}}{\eps}\right) + (1-\eps)^{t-1},
\end{align*}
where we unrolled the recurrence in the last step, and used that $p_1=1$. Then,
\begin{align*}
    \Pr[\mcA_\rf \text{ suffers miss at }t] = p_t \cdot (\eps-\eps_1) + (1-p_t) \cdot \eps_1
    = \eps_1 + (\eps-2\eps_1)\cdot p_t,
\end{align*}
and hence
\begin{align}
    \E[\cost(\mcA_\rf, T)] &= T\eps_1 + (\eps-2\eps_1) \sum_{t=1}^T p_t \nonumber \\
    &= T\eps_1 + (\eps-2\eps_1)\left[ \frac{\eps_1}{\eps}\left( T - \frac{1-(1-\eps)^T}{\eps} \right) + \frac{1-(1-\eps)^T}{\eps} \right]. \label{eqn:ref-gen-misses}
\end{align}
Now, we analyze the behavior of $\mcA_\dom$. Let
\begin{align*}
    &p_t = \Pr[\text{$\mcA_\dom$'s state is $[1,2]$ at $t$}] \\
    &q_t = \Pr[\text{$\mcA_\dom$'s state is $[1,3]$ at $t$}] \\
    &r_t = \Pr[\text{$\mcA_\dom$'s state is $[2,3]$ at $t$}],
\end{align*}
where $p_1=1, q_1=0, r_1=0$. Then, we have that
\begin{align}
    \label{eqn:miss-eqn}
    \Pr[\mcA_\dom \text{ suffers miss at }t] &= p_t \cdot (\eps-\eps_1) + q_t \cdot \eps_1 + r_t \cdot (1-\eps).
\end{align}
 Observe further that
\begin{align*}
    p_t &= p_{t-1} \cdot \Pr[\text{1 or 2 requested at }t-1] + q_{t-1} \cdot \eps_1 \cdot \Pr[\text{$\mcA_\dom$ evicts 3 from }[1,3]] \\
    &\qquad+ r_{t-1} \cdot (1-\eps) \cdot \Pr[\text{$\mcA_\dom$ evicts 3 from }[2,3]] \\
    &= p_{t-1} \cdot (1-\eps+\eps_1) +  q_{t-1} \cdot \eps_1 \cdot \Pr[\text{$\mcA_\dom$ evicts 3 from }[1,3]] \\
    &\qquad+ r_{t-1} \cdot (1-\eps) \cdot \Pr[\text{$\mcA_\dom$ evicts 3 from }[2,3]] \\
    q_t &= q_{t-1} \cdot \Pr[\text{1 or 3 requested at }t-1] + p_{t-1} \cdot (\eps-\eps_1) \cdot \Pr[\text{$\mcA_\dom$ evicts 2 from }[1,2]] \\
    &\qquad+ r_{t-1} \cdot (1-\eps) \cdot \Pr[\text{$\mcA_\dom$ evicts 2 from }[2,3]] \\
    &= q_{t-1} \cdot (1-\eps_1) + p_{t-1} \cdot (\eps-\eps_1) \cdot \Pr[\text{$\mcA_\dom$ evicts 2 from }[1,2]] \\
    &\qquad+ r_{t-1} \cdot (1-\eps) \cdot \Pr[\text{$\mcA_\dom$ evicts 2 from }[2,3]] \\
    r_t &= r_{t-1} \cdot \Pr[\text{2 or 3 requested at }t-1] + p_{t-1} \cdot (\eps-\eps_1) \cdot \Pr[\text{$\mcA_\dom$ evicts 1 from }[1,2]] \\
    &\qquad+ q_{t-1} \cdot \eps_1 \cdot \Pr[\text{$\mcA_\dom$ evicts 1 from }[1,3]] \\
    &= r_{t-1} \cdot \eps + p_{t-1} \cdot (\eps-\eps_1) \cdot \Pr[\text{$\mcA_\dom$ evicts 1 from }[1,2]]  + q_{t-1} \cdot \eps_1 \cdot \Pr[\text{$\mcA_\dom$ evicts 1 from }[1,3]].
\end{align*}
We can calculate each of the eviction probabilities above while respecting the properties that $\mcA_\dom$ must satisfy. Namely, consider that the cache state is $[1,2]$. Let $\mu$ be the dominating distribution that $\mcA_\dom$ constructs, where $\mu(1)=x$ and $\mu(2)=1-x$. We have that
\begin{align*}
    \alpha(2<1) &= \eps_1 + (\eps-\eps_1)\alpha(2<1) \implies\quad \alpha(2<1) = \frac{\eps_1}{1-\eps+\eps_1} \\
    \alpha(1<2) &= 1-\alpha(2<1) = \frac{1-\eps}{1-\eps+\eps_1}.
\end{align*}
The dominating distribution condition requires that
\begin{align*}
    &(1-x)\cdot\alpha(2<1) \le \frac{1}{2}, \quad x\cdot\alpha(1<2) \le \frac{1}{2} \\ 
    \implies \qquad & \frac{\alpha(2<1)-1/2}{\alpha(2<1)} \le x \le \frac{1}{2\alpha(1<2)}.
\end{align*}
We choose to set the upper condition to be tight, which results in
\begin{align*}
    x &= \Pr[\text{$\mcA_\dom$ evicts 1 from }[1,2]] = \frac{1-\eps+\eps_1}{2-2\eps}, \\
    1-x &= \Pr[\text{$\mcA_\dom$ evicts 2 from }[1,2]] = \frac{1-\eps-\eps_1}{2-2\eps}.
\end{align*}
Now, consider the same calculation, where the cache state is $[1,3]$, $\mu(1)=x$ and $\mu(3)=1-x$. We have that
\begin{align*}
    \alpha(3<1) &= \eps-\eps_1 + \eps_1 \cdot \alpha(3<1) \implies\quad \alpha(3<1) = \frac{\eps-\eps_1}{1-\eps_1} \\
    \alpha(1<3) &= 1-\alpha(3<1) = \frac{1-\eps}{1-\eps_1}.
\end{align*}
The dominating distribution condition requires
\begin{align*} 
    \frac{\alpha(3<1)-1/2}{\alpha(3<1)} \le x \le \frac{1}{2\alpha(1<3)}.
\end{align*}
Again, we make the upper bound tight to get
\begin{align*}
    x &= \Pr[\text{$\mcA_\dom$ evicts 1 from }[1,3]] = \frac{1-\eps_1}{2-2\eps}, \\
    1-x &= \Pr[\text{$\mcA_\dom$ evicts 3 from }[1,3]] = \frac{1+\eps_1-2\eps}{2-2\eps}.
\end{align*}
Finally, consider the calculation, where the cache state is $[2,3]$, $\mu(2)=x$ and $\mu(3)=1-x$. We have that
\begin{align*}
    \alpha(3<2) &= \eps-\eps_1 + (1-\eps) \cdot \alpha(3<2) \implies\quad \alpha(3<2) = \frac{\eps-\eps_1}{\eps} \\
    \alpha(2<3) &= 1-\alpha(3<2) = \frac{\eps_1}{\eps}.
\end{align*}
The dominating distribution condition requires
\begin{align*} 
    \frac{\alpha(3<2)-1/2}{\alpha(3<2)} \le x \le \frac{1}{2\alpha(2<3)},
\end{align*}
so we set
\begin{align*}
    x &= \Pr[\text{$\mcA_\dom$ evicts 2 from }[2,3]] = \frac{\eps}{2\eps_1}, \\
    1-x &= \Pr[\text{$\mcA_\dom$ evicts 3 from }[2,3]] = \frac{2\eps_1-\eps}{2\eps_1}.
\end{align*}
Substituting all these eviction probabilities, and writing the system of equations for $p_t,q_t,r_t$ above in matrix form, we obtain
\begin{align*}
    \begin{bmatrix}
        p_t \\ q_t \\ r_t
    \end{bmatrix}
    &=
    \begin{bmatrix}
       1-\eps+\eps_1  & \eps_1 \cdot \frac{1+\eps_1-2\eps}{2-2\eps} & (1-\eps) \cdot \frac{2\eps_1-\eps}{2\eps_1} \\
       (\eps-\eps_1) \cdot \frac{1-\eps-\eps_1}{2-2\eps} & 1-\eps_1 & (1-\eps) \cdot \frac{\eps}{2\eps_1} \\
       (\eps-\eps_1) \cdot \frac{1-\eps+\eps_1}{2-2\eps} & \eps_1 \cdot \frac{1-\eps_1}{2-2\eps} & \eps
    \end{bmatrix}
    \begin{bmatrix}
        p_{t-1} \\ q_{t-1} \\ r_{t-1}
    \end{bmatrix} \\
    &= B^{t-1} \begin{bmatrix}
        p_{1} \\ q_{1} \\ r_{1}
    \end{bmatrix}
    = B^{t-1} \begin{bmatrix}
        1 \\ 0 \\ 0
    \end{bmatrix},
\end{align*}
where we denoted the matrix by $B$ and rolled out the recurrence. Substituting in \eqref{eqn:miss-eqn}, we have that
\begin{align*}
    \Pr[\mcA_\dom \text{ suffers miss at }t] &= 
    \begin{bmatrix}
        \eps-\eps_1 & \eps_1 & 1-\eps
    \end{bmatrix}
    B^{t-1} \begin{bmatrix}
        1 \\ 0 \\ 0
    \end{bmatrix},
\end{align*}
and thus,
\begin{align}
    \E[\cost(\mcA_\dom, T)] &= \begin{bmatrix}
        \eps-\eps_1 & \eps_1 & 1-\eps
    \end{bmatrix}
    (I + B + B^2 + \dots + B^{T-1}) \begin{bmatrix}
        1 \\ 0 \\ 0
    \end{bmatrix}. \label{eqn:dom-gen-misses}
\end{align}
Now that we have exact expressions given by \eqref{eqn:ref-gen-misses} and \eqref{eqn:dom-gen-misses}, we can do a computer search over assignments to $T, \eps, \eps_1$---each assignment gives a lower bound for $\frac{\E[\cost(\mcA_\dom, T)]}{\E[\cost(\mcA_\rf, T)]}$. For example, our (admittedly limited) search yields
\begin{align*}
    \frac{\E[\cost(\mcA_\dom, T)]}{\E[\cost(\mcA_\rf, T)]} \ge 1.5907,
\end{align*}
for $\eps=10^{-5}, T=10^8, \eps_1=0.7069\eps$.

\begin{remark}
    \label{remark:potential-improving-lower-bound}
    While we do believe that further tuning $\eps, \eps_1$ and $T$ will likely not improve the lower bound by too much, it is conceivable that a better lower bound may be obtained by going beyond $n=3, k=2$ (albeit with messier calculations). For example, there appears to be scope for improvement within instances having the same blueprint as above for slightly larger values of $n$ and $k$, where there are certain high-probability pages that $\mcA_\rf$ always keeps in the cache.
\end{remark}

\section{Learning the Markov Chain from Samples}
\label{sec:robustness}

It is worth pointing out that the form of \Cref{lem:tight-analysis} allows us to obtain performance guarantees even when we only have approximate estimates of the $\alpha(p<q)$ values, as also noted in passing by \cite{lund1999paging}. In particular, suppose that we only have $\eps$-approximate (multiplicate or additive) estimates $\hat{\alpha}(p<q)$ of $\alpha(p<q)$, that still satisfy $\hat{\alpha}(p<q) + \hat{\alpha}(q<p)=1$, and $\hat{\alpha}(p<p)=0$. Suppose that at each cache miss, we use these $\hat{\alpha}(p<q)$ values to compute a dominating distribution $\hat{\mu}$, and draw the page $p$ to evict from $\hat{\mu}$.

Even with such an approximate dominating distribution, we can still guarantee either $\frac{2}{1-2\eps}$-competitiveness (in the case that $\hat{\alpha}(p<q) \in [\alpha(p<q)-\eps, \alpha(p<q)+\eps]$), or $\frac{2-2\eps}{1-2\eps}$-competitiveness (in the case that $\hat{\alpha}(p<q) \in [(1-\eps)\alpha(p<q), (1+\eps)\alpha(p<q)]$). This follows as a direct consequence of \Cref{claim:dominating-distribution-probability} and \Cref{lem:tight-analysis}.

Perhaps the first scenario to consider here is when we do not assume prior knowledge of the transition matrix $M$, but only have access to samples previously drawn from the Markov chain. For example, we can imagine that the algorithm designer has available a large training dataset of past page requests. This dataset can be used to derive estimates of the entries in the transition matrix $M$, for instance, using one of the several existing estimators \citep{hao2018learning, wolfer2021statistical, huang2024non}. In fact, recall that the computation of the $\alpha(p<q)$ values requires only for us to solve a linear system determined by the entries in the transition matrix (\Cref{eqn:linear-system-for-alpha}). Hence, if we have accurate estimates for the entries in $M$, we can use the theory of perturbed linear systems to bound the resulting error in the estimates of the $\alpha(p<q)$ values. Thereafter, using the argument from the previous paragraph, we can obtain a performance guarantee for the competitiveness of a dominating distribution algorithm that uses these estimates.

We can turn the above informal argument into a formal learning-theoretic result. Formally, suppose that every entry in $M$ is at least some $\delta > 0$. In this case, the Markov chain is irreducible (meaning that there is a positive, and unique probability of eventually getting from any state to another). Thus, in this case, for every fixed $p,q$, the linear system \eqref{eqn:linear-system-for-alpha} has a (unique) solution, meaning that matrix associated with it is invertible. The relevant quantity here will be the conditioning of the \textit{worst} linear system that we might ever have to solve. Towards this, let $L_{p,q}$ be the matrix associated with the linear system in \eqref{eqn:linear-system-for-alpha} for pages $p,q$, and define:
\begin{align}
    \label{eqn:gamma-def}
    \gamma &= \sup_{p \neq q} \|L_{p,q}^{-1}\|_\infty,
\end{align}
where for a square matrix $A \in \R^{n \times n}$, $\|A\|_\infty = \sup_i \sum_{j}|A_{i,j}|$. Observe from \eqref{eqn:linear-system-for-alpha} that $1 \le \|L_{p,q}\|_\infty \le 2$ for every $p,q$.

We choose to adopt the following result of \cite{hao2018learning} for our exposition:

\begin{theorem}[Theorem 4 in \cite{hao2018learning}]
    \label{thm:learning-markov-chains}
    Suppose we unroll a Markov chain to obtain $m$ samples $X_1,\dots,X_m$, where the initial distribution of $X_1$ is arbitrary, but the transition matrix $M \in [0,1]^{n \times n}$ of the Markov chain satisfies $M_{i,j} \ge \delta > 0$ for every $i,j$. Then, there exists an estimator $\hat{M}=\hat{M}(X_1,\dots,X_m)$ that satisfies
    \begin{align*}
        \E_{X_1,\dots,X_m}\|M(i,:) - \hat{M}(i,:)\|_2^2 \le O \left(\frac{1}{m\delta}\right)
    \end{align*}
    for every $i \in [n]$, where $M(i,:)$ (respectively $\hat{M}(i,:)$) denotes the $i^\text{th}$ row of $M$ (respectively $\hat{M}$).
\end{theorem}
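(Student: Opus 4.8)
Since \Cref{thm:learning-markov-chains} is a result from prior work, the ``proof'' is ultimately a pointer to \cite{hao2018learning}; for completeness I would sketch the standard argument it rests on. The plan is to reduce the estimation of each row $M(i,:)$ to estimating a discrete distribution over $[n]$ from a \emph{random} number of conditionally-i.i.d.\ samples, and then to control that random number using the fast mixing forced by $M_{i,j}\ge\delta$.

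First I would fix the state $i$, let $N_i$ be the number of indices $t\le m-1$ with $X_t=i$, let $N_{i,j}$ be the number of $t$ with $X_t=i$ and $X_{t+1}=j$, and use an add-constant estimator $\hat M(i,j)=\frac{N_{i,j}+\beta}{N_i+\beta n}$ for a small constant $\beta$ (this behaves gracefully when $N_i=0$; the plain empirical frequencies $N_{i,j}/N_i$ suffice on the event $N_i\ge 1$). The key structural step is that, by the strong Markov property, conditioned on $\{N_i=\ell\}$ the $\ell$ successor states $\{X_{t+1}:X_t=i\}$ are i.i.d.\ draws from $M(i,:)$. Hence conditioned on $N_i=\ell\ge 1$, $\hat M(i,:)$ is an add-constant estimator of $M(i,:)$ built from $\ell$ i.i.d.\ samples, and routine bias--variance bookkeeping gives $\E\big[\|M(i,:)-\hat M(i,:)\|_2^2 \,\big|\, N_i=\ell\big]=O(1/\ell)$ (per-coordinate variances summing to $O(1/\ell)$, plus a lower-order bias term from $\beta$); on $\{N_i=0\}$ the squared error is at most $2$.

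Next I would show $N_i=\Omega(m\delta)$ except with probability exponentially small in $m\delta$, uniformly over the arbitrary initial distribution. Here $M_{i,j}\ge\delta$ supplies Doeblin's minorization condition $M(x,\cdot)\ge n\delta\cdot(\text{uniform on }[n])$, so the chain is uniformly ergodic at a rate depending only on $n\delta$; in particular its stationary law $\pi$ satisfies $\pi_j=\sum_x\pi_x M(x,j)\ge\delta$ for every $j$, which gives $\E[N_i]=\Omega(m\delta)$ after discarding an $O(1/(n\delta))$-length transient. Concentration of the occupation count then follows from a Chernoff/Bernstein-type inequality for Markov chains (or, more elementarily, a regeneration/blocking argument exploiting the minorization to split the trajectory into near-independent blocks). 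Combining the two pieces, $\E\|M(i,:)-\hat M(i,:)\|_2^2=\E\big[\E[\,\cdot\,\mid N_i]\big]\le O\!\big(\tfrac{1}{m\delta}\big)+2\cdot e^{-\Omega(m\delta)}=O\!\big(\tfrac{1}{m\delta}\big)$, which is the claim (and trivially true when $m\delta=O(1)$, since then $\tfrac{1}{m\delta}=\Omega(1)$ already dominates the worst-case error $2$).

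I expect the second step to be the real obstacle: for a \emph{dependent} sequence, obtaining not merely $\E[N_i]=\Omega(m\delta)$ but genuine concentration, and doing so uniformly in the arbitrary start state, is the delicate part — one must invoke fast mixing to argue the transient is negligible and then a Markov-chain tail bound or blocking argument for the occupation measure. The conditional-i.i.d.\ reduction, the add-constant bias--variance computation, and the bound $\pi_j\ge\delta$ are all routine. Finally, I would note that this theorem is precisely the input needed downstream: row-wise $\ell_2$ accuracy of $\hat M$ translates, via perturbation bounds for the linear systems in \eqref{eqn:linear-system-for-alpha} (controlled by the conditioning quantity $\gamma$ of \eqref{eqn:gamma-def}), into additive $\eps$-accuracy of the estimated $\hat\alpha(p<q)$ values, whereupon \Cref{claim:dominating-distribution-probability} together with \Cref{lem:tight-analysis} yields $\tfrac{2}{1-2\eps}$-competitiveness for the resulting dominating distribution algorithm.
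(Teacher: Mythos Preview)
Your identification is correct: the paper does not prove \Cref{thm:learning-markov-chains} at all---it is stated purely as a quoted result from \cite{hao2018learning} and used as a black box in the proof of \Cref{thm:approx-lpr-from-samples}. Your sketch of the underlying argument (conditional-i.i.d.\ reduction via the strong Markov property, add-constant estimator, and concentration of the occupation count $N_i$ around $\Omega(m\delta)$ using the Doeblin minorization implied by $M_{i,j}\ge\delta$) is a reasonable outline of how such a bound is typically obtained, and your flagging of the occupation-time concentration as the nontrivial step is accurate; but none of this appears in the paper, so there is no paper-side proof to compare against.
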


Using this theorem, %
we can obtain the following result:

\begin{theorem}
    \label{thm:approx-lpr-from-samples}
    Suppose that the page requests are generated from an unknown Markov chain where every entry in the transition matrix $M$ is at least $\delta > 0$. Let $\gamma$ be as defined in \eqref{eqn:gamma-def}. Given a training dataset of $m = O\left(\frac{\gamma^2n^2}{\eps^2\delta}\right)$ past page requests from $M$, there is an algorithm $\mcA$ which is $\frac{2}{1-2\eps}$-competitive against $\opt$ with probability at least $0.99$ over the $m$ samples.. %
\end{theorem}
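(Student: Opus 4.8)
The plan is to treat the learning problem in three stages: learn the transition matrix, propagate the error through the linear systems that define the $\alpha$-values, and then symmetrize the resulting estimates so that the robust version of \Cref{lem:tight-analysis} applies. Concretely, $\mcA$ will be the dominating distribution algorithm run with estimates $\tilde\alpha(p<q)$ built from a learned matrix $\hat M$. The endpoint of the argument is the robustness remark stated just before \Cref{thm:learning-markov-chains}: if every $\tilde\alpha(p<q)$ lies in $[\alpha(p<q)-\eps,\alpha(p<q)+\eps]$ and the $\tilde\alpha$'s form a valid ``tournament'' (i.e.\ $\tilde\alpha(p<q)+\tilde\alpha(q<p)=1$, $\tilde\alpha(p<p)=0$), then the dominating distribution $\tilde\mu$ computed from them satisfies $\E_{p\sim\tilde\mu}[\alpha(p<q)] \le \E_{p\sim\tilde\mu}[\tilde\alpha(p<q)] + \eps \le \tfrac12+\eps$ for every $q$ in cache, so (arguing as in \Cref{claim:dominating-distribution-probability}) the probability that $q$ is requested no later than the evicted page is at least $\tfrac{1-2\eps}{2}$, and \Cref{lem:tight-analysis} with $c=\tfrac{2}{1-2\eps}$ gives the claimed competitiveness. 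Hence it suffices to show that $m = O(\gamma^2 n^2/(\eps^2\delta))$ samples guarantee $\max_{p,q,s}|\tilde\alpha(p<q\mid s)-\alpha(p<q\mid s)| \le \eps$ with probability at least $0.99$.

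For the learning stage, apply \Cref{thm:learning-markov-chains} to obtain $\hat M$ with $\E\|M(i,:)-\hat M(i,:)\|_2^2 \le O(1/(m\delta))$ for each $i$. Summing over $i\in[n]$ and applying Markov's inequality, with probability at least $0.99$ we have $\sum_i \|M(i,:)-\hat M(i,:)\|_2^2 \le O(n/(m\delta))$, hence $\max_i\|M(i,:)-\hat M(i,:)\|_2 \le O(\sqrt{n/(m\delta)})$; by Cauchy--Schwarz this gives $\|M-\hat M\|_\infty = \max_i\|M(i,:)-\hat M(i,:)\|_1 \le \sqrt{n}\cdot O(\sqrt{n/(m\delta)}) = O(n/\sqrt{m\delta})$, where $\|\cdot\|_\infty$ is the induced (max absolute row sum) matrix norm. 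Call this event $\mathcal{E}$ and condition on it from now on.

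The perturbation stage is the crux. For each pair $p\ne q$, the matrix $\hat L_{p,q}$ of the linear system \eqref{eqn:linear-system-for-alpha} built from $\hat M$ agrees with $L_{p,q}$ on the $e_p$ and $e_q$ rows and differs from it by exactly $\hat M(r,:)-M(r,:)$ on every other row $r$, so $\|\hat L_{p,q}-L_{p,q}\|_\infty \le \|\hat M-M\|_\infty$ uniformly over all pairs. Once $m = \Omega(\gamma^2 n^2/\delta)$, we have $\gamma\|\hat M-M\|_\infty \le \tfrac12$, and a Neumann-series (Banach lemma) argument shows each $\hat L_{p,q}$ is invertible with $\|\hat L_{p,q}^{-1}\|_\infty \le 2\gamma$. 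Since the right-hand side of \eqref{eqn:linear-system-for-alpha} depends only on $p,q$ and not on the matrix, the true solution $x$ (with $x_s=\alpha(p<q\mid s)$, so $\|x\|_\infty\le 1$) and the estimated solution $\hat x$ satisfy $\hat x - x = \hat L_{p,q}^{-1}(L_{p,q}-\hat L_{p,q})x$, whence
\begin{align*}
    \|\hat x - x\|_\infty \le \|\hat L_{p,q}^{-1}\|_\infty\,\|\hat L_{p,q}-L_{p,q}\|_\infty\,\|x\|_\infty \le 2\gamma\cdot O\!\left(\frac{n}{\sqrt{m\delta}}\right).
\end{align*}
This bound holds simultaneously for all pairs on the single event $\mathcal{E}$, so no union bound is needed, and the right-hand side is at most $\eps$ as soon as $m = O(\gamma^2 n^2/(\eps^2\delta))$.

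Finally, symmetrize: set $\tilde\alpha(p<q\mid s) = \tfrac12\bigl(1 + \hat\alpha(p<q\mid s) - \hat\alpha(q<p\mid s)\bigr)$ for $p\ne q$ and $\tilde\alpha(p<p\mid s)=0$. By construction $\tilde\alpha(p<q)+\tilde\alpha(q<p)=1$, which is exactly the structural property needed for a dominating distribution to exist; and since $\alpha(p<q)+\alpha(q<p)=1$ (the chain is irreducible, so a.s.\ exactly one of $p,q$ is requested first), we get $|\tilde\alpha(p<q)-\alpha(p<q)| \le \tfrac12|\hat\alpha(p<q)-\alpha(p<q)| + \tfrac12|\hat\alpha(q<p)-\alpha(q<p)| \le \eps$. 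Thus on $\mathcal{E}$ the $\tilde\alpha$'s meet all hypotheses of the robustness argument above, finishing the proof. I expect the main obstacle to be the perturbation bookkeeping: choosing the matrix norm so the dimension factors combine to exactly $n^2$, and verifying via the Neumann condition that \emph{all} the perturbed systems $\hat L_{p,q}$ remain invertible with $\|\hat L_{p,q}^{-1}\|_\infty = O(\gamma)$ uniformly; the learning and symmetrization steps are routine once the error bound on $\|\hat M - M\|_\infty$ is in hand.
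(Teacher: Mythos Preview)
Your proof is correct and follows essentially the same three-stage approach as the paper (estimate $M$ via \Cref{thm:learning-markov-chains}, push the error through the linear systems \eqref{eqn:linear-system-for-alpha} with a Neumann/perturbation bound, then invoke \Cref{lem:tight-analysis}). Two minor differences are worth noting: you apply Markov's inequality once to the sum $\sum_i \|M(i,:)-\hat M(i,:)\|_2^2$ rather than per-row followed by a union bound (both yield the same $\|\Delta\|_\infty \le O(n/\sqrt{m\delta})$), and you add an explicit symmetrization $\tilde\alpha(p<q)=\tfrac12(1+\hat\alpha(p<q)-\hat\alpha(q<p))$ to guarantee the tournament property needed for \Cref{thm:dominating-distribution-exists}---the paper assumes this property in the preamble to the section but does not verify it for the $\hat\alpha$'s produced by the perturbed linear systems, so your version is actually a bit more careful here.
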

\begin{proof}
    From the initial $m$ page requests, $\mcA$ constructs the estimator $\hat{M}$ given by \Cref{thm:learning-markov-chains}. Applying Markov's inequality to the in-expectation guarantee on $\hat{M}$, for every fixed $i \in [n]$, we have that with probability at least $1-0.01 \cdot \frac{1}{n}$, 
    \begin{align*}
        \|M(i,:) - \hat{M}(i,:)\|_2^2 \le O \left(\frac{n}{m\delta}\right).
    \end{align*}
    A union bound over the universe of $n$ pages gives that with probability at least $0.99$, we have that 
    \begin{align*}
        \sup_{i \in [n]}\|M(i,:) - \hat{M}(i,:)\|_2^2 \le O \left(\frac{n}{m\delta}\right)
    \end{align*}
    The Cauchy-Schwarz inequality then implies that with probability at least $0.99$,
    \begin{align*}
        \sup_{i \in [n]}\sum_{j=1}^n |M(i,j)-\hat{M}(i,j)| \le O \left(\frac{n}{\sqrt{m\delta}}\right).
    \end{align*}
    Denoting $\Delta = \hat{M}-M$, we have effectively argued that $\|\Delta\|_\infty \le O \left(\frac{n}{\sqrt{m\delta}}\right)$ with probability at least $0.99$.

    The algorithm $\mcA$ plays the dominating distribution algorithm as its eviction policy on a cache miss; however, it constructs the dominating distribution using the estimator $\hat{M}$. Namely, for every pair $p,q$, let $\hat{L}_{p,q}$ be the matrix associated with the linear system in \eqref{eqn:linear-system-for-alpha}, where instead of the entries in $M$, we plug in entries from our estimator $\hat{M}$. Consider the solution $\hat{x}$ of the perturbed linear system $\hat{L}_{p,q} \cdot \hat{x} = b_{p,q}$, where $b_{p,q}$ is the right-hand side of \eqref{eqn:linear-system-for-alpha}. We care about bounding the difference between $\hat{x}$ and $x$, where $x$ is the solution to $L_{p,q} \cdot x = b_{p,q}$. Note that the right-hand side in both the perturbed and unperturbed systems (namely $b_{p,q}$) is still the same.

    From a standard analysis of solutions to perturbed linear systems (e.g., see Theorem 1 in \citep{falknotes}), we have that
    \begin{align*}
        \|\hat{x}-x\|_\infty &\le \frac{\|L_{p,q}^{-1}\|_\infty \|\hat{L}_{p,q}-L_{p,q}\|_\infty \|x\|_\infty}{1-\|L_{p,q}^{-1}\|_\infty \|\hat{L}_{p,q}-L_{p,q}\|_\infty}.
    \end{align*}
    Now, observe that $\|\hat{L}_{p,q}-L_{p,q}\|_\infty \le \|\Delta\|_\infty$, and also that $\|x\|_\infty=1$. Plugging in our bound above on $\Delta$, we get
    \begin{align*}
        \|\hat{x}-x\|_\infty &\le \frac{O(\gamma n/\sqrt{m\delta})}{1-O(\gamma n/\sqrt{m\delta})}.
    \end{align*}
    Setting $m=C \cdot \frac{\gamma^2n^2}{\eps^2\delta}$ for a constant $C$ is sufficient to make the right-hand side above to be at most $\eps$.\footnote{In fact, this is also sufficient to satisfy the technical condition of Theorem 1 in \citep{falknotes} that we used.} But this means that all the $\hat{\alpha}(p<q)$ estimates that $\mcA$ constructs satisfy
    \begin{align*}
        {\alpha}(p<q)-\eps \le \hat{\alpha}(p<q) \le {\alpha}(p<q) + \eps.
    \end{align*}
    Thus, the dominating distribution $\hat{\mu}$ that $\mcA$ constructs at every cache miss has the property that for every fixed $q$ in the cache,
    \begin{align*}
        \E_{p \sim \hat{\mu}}[\hat{\alpha}(p<q)] \le \frac{1}{2} 
        \quad\implies\quad \E_{p \sim \hat{\mu}}[{\alpha}(p<q)] \le \frac{1}{2}+\eps.
    \end{align*}
    Instantiating \Cref{claim:dominating-distribution-probability} and \Cref{lem:tight-analysis} then gives the desired result.
\end{proof}

\section*{Acknowledgements}
Chirag is supported by Moses Charikar and Gregory Valiant's Simons Investigator Awards. The authors would like to thank Avrim Blum for helpful pointers. The authors also thank Romain Cosson for pointing out that the guarantee for the median algorithm in \cite{lund1994ip} requires an additional independence assumption.

\bibliographystyle{abbrvnat}
\bibliography{ref}

\end{document}